\newtheorem{theorem}{Theorem}[section]
\newtheorem{lemma}[theorem]{Lemma}	
\newtheorem{corollary}[theorem]{Corollary}
\newtheorem{definition}[theorem]{Definition}
\newtheorem{proposition}[theorem]{Proposition}
\begin{document}

\begin{frontmatter}

\title{Cryptographic Primitives based on Compact Knapsack Problem}
\author[1]{George S. Rizos} 
\ead{georgiosrs@csd.auth.gr}
\author[1]{Konstantinos A. Draziotis}
\ead{drazioti@csd.auth.gr}

\address[1]{Department of Informatics\\
Aristotle University of Thessaloniki \\
54124 Thessaloniki, Greece}

	\begin{abstract}
	In the present paper, we extend previous results of an id scheme based on compact knapsack problem defined by one equation. We present a sound three-move id scheme based on compact knapsack problem defined by an integer matrix. We study this problem by providing attacks based on lattices. Furthermore, we provide the corresponding  digital signature obtained by Fiat-Shamir transform and we prove that is secure under ROM. These primitives are post quantum resistant.
	\end{abstract}

\begin{keyword}
Public Key Cryptography\sep Lattices\sep Closest Vector Problem\sep Babai's Nearest Plane Algorithm\sep Zero Knowledge\sep Sigma-id-schemes \sep Fiat-Shamir transform \sep Compact knapsack problem\sep Digital signatures\sep Linear Systems.
%% keywords here, in the form: keyword \sep keyword

%% PACS codes here, in the form: \PACS code \sep code

%% MSC codes here, in the form: \MSC code \sep code
%% or \MSC[2008] code \sep code (2000 is the default)
\MSC[2020] 94A60 %\sep 11H06\sep 11T71\sep 68R01.
\end{keyword}
\end{frontmatter}
{\let\thefootnote\relax\footnotetext{All the authors contributed equally to this research.}}
\section{Introduction}
Before we proceed presenting our results we define some terminology and give some necessary definitions. 
With Compact knapsack problem, we mean the problem of finding integer solutions to a linear system, that satisfy some constraints. Say, $A{\bf x}^T={\bf b}^T$ such a system, for some integer matrix $A$ and we want to find an integer vector ${\bf x}$ that belongs to a special set ${\mathcal{S}}.$ This problem appears to integer-programming problems, where we have to find some positive integer solutions, so in this case ${\mathcal{S}}={\mathbb{Z}}_{\ge 0}^n.$ Integer Linear Problems have many practical applications, for instance, in {\it{Capital Budgeting}}, {\it{Warehouse Location}} and {\it{Scheduling}}, see \cite[Chapter 9]{ilp}. In the present paper we provide algorithms for attacking compact knapsack problem. Thus, some of the results of this paper may be of independent   interest, not only in cryptography. Without the constraints, systems of linear equations can be solved in polynomial time, either in the real numbers (for instance, using Gauss reduction of the matrix) or over the integers. The latter was showed first time in 1976, using Hermite Normal Form (HNF) of a matrix, by von zur Gathen and Sieveking \cite{von} and Frumken \cite{frumkin}. Furthermore, by Kannan and Bachem was showed a similar result, but using Smith Normal Form (SNF) of a matrix, see \cite{kannan}.

Next, we use this problem to construct a Sigma-id-scheme based on compact knapsack problem. This is an extension of the results of paper \cite{draz2}. What is more, we provide the necessary proofs for completeness, special soundness, and special honest verifier zero knowledge properties, for the general case, i.e. the compact knapsack is not defined by a single equation as in \cite{draz2}, and finally we present the corresponding digital signature by using the Fiat-Shamir transform. We prove that it is secure under the Random Oracle Model (ROM). We remark here that our (computational) assumption is based on the difficulty of solving compact knapsack problem, which it is potentially quantum resistant problem. So, our cryptographic primitives are candidates for the post quantum crypto.

\subsection{Our contribution}
One question that arises from \cite{draz2} is why not to use a linear system instead of one linear equation in order to build our id scheme. For instance, we do not know if the compact knapsack problem becomes harder or easier when we increase the number of equations $m$. This problem is of independent interest in arithmetic linear algebra and complexity theory.  To answer this question, we study the hardness of compact knapsack problem for linear systems using similar methods as in \cite{draz2}. We generalize the attacks and we put a more general framework for the suggested id scheme. Furthermore, we prove that our id scheme is secure under active attacks.

What is more, we extend our study to build digital signatures based on the id scheme using Fiat-Shamir transform. The signature scheme is proved to be secure under active attacks. It is worth mentioning that cryptographic primitives based on compact knapsack problem are suitable for post quantum cryptography, since there is not any quantum polynomial attack, until today, for this problem.
\subsection{Roadmap.}
In section \ref{section-1} we provide the necessary background for lattices. In section \ref{section-2} we give the definition of the compact knapsack problem and we suggest some lattice-based attacks. In the next section we present our id scheme and we build the corresponding digital signature. Additionally, we prove that it is secure under active attacks in the random oracle model. Finally, in the last section we provide some concluding remarks. 
\section{Lattices}\label{section-1}
See \cite{galbraith,mic} for an account on lattices.
\begin{definition}\label{1}
A subset $L\subset {\bf{R}}^n$ is called a lattice if there exist linearly independent set of vectors $B=\{{\bf{b}}_1,{\bf{b}}_2,...,{\bf{b}}_k\}$ of ${\bf{R}}^{n}$ such that
\[L=\Big{\{} \sum_{j=1}^{k}\alpha_j{\bf{b}}_j : \alpha_j\in\mathbb{Z}, 1\leq j\leq k\Big{\}}:=L({\bf b }_1,{\bf b }_2,...,{\bf b }_k).\]
The set $B$ is called a lattice basis of $L.$
\end{definition}
All the bases have the same number of elements, this common number is called {\it{rank}} of the lattice. Let $A$ be the matrix which have as rows the vectors of the basis. Then, we say that the matrix is generated by the rows of $A.$ Below we present the two basic problems on lattices, SVP: Shortest Vector Problem and CVP: Closest Vector Problem.
\subsection{SVP/CVP}
There are two fundamental problems in lattices, the Shortest Vector Problem (SVP), which is the task of finding a shortest vector in a lattice $L(B)$ and the Closest Vector Problem (CVP), where we are looking for a lattice vector closest to another given vector (not in lattice but in the real span of it). 

In a landmark paper of Ajtai, SVP is proved to be NP-hard under randomized reductions \cite{ajtai}. Further, there is the approximation version of SVP, the  ${\text{SVP}}_{\gamma},$ with factor $\gamma\in {\mathbb{R}}_{>1}.$ That is, we are looking for lattice vectors 
${\bf x}\not= {\bf 0}$ with, $||{\bf x}||<\gamma(n)||{\bf y}||$ for every ${\bf y}\in L(B)-\{\bf{0}\}.$ 
A generic algorithm to attack this problem is the enumeration algorithm, see \cite{gama}, which is exponential with respect to the rank of the lattice.

In order to define CVP we need a target vector ${\bf t}$: 
$${\bf t}\in span(B)=\Big{\{} \sum_{j=1}^{k}c_j{\bf{b}}_j : c_j\in{\mathbb{R}} \Big{\}},$$ and we are looking for 
a  vector ${\bf x}\in L(B),$ such that $||{\bf x}-{\bf t}||\leq ||{\bf y}-{\bf t}||$ for every ${\bf y}\in L(B).$ This problem is proved to be NP-hard.

Having a lattice, we need to work with {\it{good}} bases. That is, the basis vectors have small lengths and are almost orthogonal.  A widely known algorithm, that provides such bases is the LLL algorithm, which was developed in 1982 by A.~Lenstra, H.~Lenstra, and L.~Lov$\acute{\text{a}}$sz, see \cite{LLL}.
Furthermore, LLL solves ${\text{SVP}}_{\gamma}$ in polynomial time for $\gamma=2^{k/2},$ where $k$ is the rank of the lattice. In fact, LLL behaves better in practice than in theory. 

\section{Compact knapsack problem}\label{section-2}
\subsection{An integer solution to a linear system}\label{a_solution_to_a_linear_system}
Before we provide our cryptographic system, we present an attack to compact knapsack problem, which is the underlying problem of our scheme.

Let $A=(a_{ij}),$ a $m\times n$ $(m\le n)$ matrix with integer entries. We want to solve the system 
$A{\bf X}^T ={\bf d}^T$, ${\bf X}\in {\mathcal{M}}_{1\times n}({\mathbb{Z}})$  and ${\bf d}\in {\mathcal{M}}_{1\times m}({\mathbb{Z}}).$ Let ${\bf x}$ be an integer solution of the system. We are interested in finding small and balanced\footnote{i.e. the coordinates are of the same magnitude. } solutions ${\bf x}'$s. We follow \cite{Lenstra1}. Let $B$ be the $(n+1)\times (n+m+1)$ matrix,
\begin{equation}\label{definition_of_matrix}
		B=
		\left[\begin{array}{c|c|c}
			I_n &  {\bf 0}_{n\times 1} & N_2A^T  \\
			\hline
			{\bf 0}_{1\times n} &  N_1 & -N_2{\bf d}  
		\end{array}\right],
		\end{equation}
that is,
\begin{equation}\label{definition_of_matrix}
		B=
		\left[\begin{array}{cccc|c|cccc}
			1 &     0 & \dots &   0  &  0  &   N_2a_{11}    &  N_2a_{21}   & \dots & N_2a_{m1}  \\
			0&     1 & \dots &   0    &  0 &   N_2a_{12}    &  N_2a_{22}   & \dots & N_2a_{m2}  \\
			\vdots & \vdots  &  \ddots &  & \vdots &   \vdots & \vdots & \ddots & \vdots   \\
			0 & 0 & \dots & 1        &  0  &    N_2a_{1n}    &  N_2a_{2n}   & \dots & N_2a_{mn}  \\
			\hline
			0 & 0 &  \dots &    0 &  N_1 &  -N_2d_1   &-N_2d_2 & \dots & -N_2d_{n}    \\
			
		\end{array}\right],
		\end{equation}
		where $N_1,N_2,$ are large integers, which we shall choose later. From \cite[Proposition 1]{Lenstra1}, ${\bf x}$ is a solution of $A{\bf X}^T ={\bf d}^T$ if and only if the following two vectors are equal,
		$$
		\left[\begin{array}{ccccc}
			{\bf x} & | & N_1 & | & {\bf 0}_{1\times m}
		\end{array}\right]= 
		\left[\begin{array}{ccc}
			{\bf x} & | & 1
		\end{array}\right] B. $$
		That is, ${\bf x}$ is a solution of $A{\bf X}^T ={\bf d}^T$ if and only if the vector $({\bf x} , N_1 , {\bf 0}_{1\times m})$ belongs to the lattice $L(B)\subset {\mathbb{Z}}^{n+m+1}.$
		Furthermore,  ${\bf x}$ is a solution of $A{\bf X}^T ={\bf 0}_{m\times 1}$ if and only if, 
		$$
		\left[\begin{array}{ccccc}
			{\bf x} & | & 0 & | & {\bf 0}_{1\times m}
		\end{array}\right]= 
		\left[\begin{array}{ccc}
			{\bf x} & | & 0
		\end{array}\right] B. $$
		As previous,  ${\bf x}$ is a solution of $A{\bf X}^T ={\bf 0}_{m\times 1}$ if and only if the vector $({\bf x},0 , {\bf 0}_{1\times m})$ belongs to the lattice $L(B).$
According to \cite[Theorem 4]{Lenstra1} the solution if exists, it is located at the row which contains $N_1$ (according to the form of the matrix, there is exactly one such row). To apply this algorithm we use LLL. So the procedure for finding an integer solution has polynomial time (in practice is very fast).
Finally, we note that, this algorithm in rare cases may fail, even if the system has an integer solution. Then, in practice we choose new values for $N_1$ and $N_2.$
\subsection{CVP-attack to Compact Knapsack Problem}
We generalize the algorithm of \cite[Section 5]{draz2}, for many equations. We set $I_{\alpha}$ be the set of integers having $\alpha-$ bits. Let
$A{\bf X}^T={\bf C}^T,$ where $A\in {\mathcal{M}}_{m\times n}({\mathbb{Z}}),$ ${\bf X}^T\in {\mathcal{M}}_{n\times 1}({\mathbb{Z}})$, ${\bf C}^T\in {\mathcal{M}}_{m\times 1}({\mathbb{Z}})$ and we restrict the solution vector ${\bf X}$ to a set $\mathcal{S}\subset {\mathbb Z}^{n}$. Let $L\subset {\mathbb{Z}}^n$ be the lattice generated by the solutions of the homogeneous system $A{\bf X}^{T}={\bf 0}.$ This lattice has dimension $n-rank(A).$ 
An approach to attack compact knapsack is to reduce it, to a suitable Closest Vector Problem (CVP). The idea is the following. 

 Let ${\bf y}$ be an integer solution of the system $A{\bf X}^T={\bf C}^T$ and choose a suitable target vector ${\bf t}\in span(L)$ (with ${\bf t}\in \mathcal{S}$). 
Then, we solve the CVP instance $CVP(L,{\bf t})$ and say ${\bf b}$ its output. Since vector ${\bf b}$ is close to ${\bf t}$, it is probably in ${\mathcal{S}}.$ So, we expect the solution ${\bf x}={\bf y}+ j{\bf b}$ (for some small integer $j$) to be in ${\mathcal{S}}.$ This can be explained as follows : Since ${\bf y}$ has balanced coordinates (i.e. all the coordinates are close to each other) and assuming that it is short enough, then the sum ${\bf y}+j{\bf b}$ is close to the set ${\mathcal{S}}.$ Thus, we hope that all the entries of  ${\bf y}+j{\bf b}$ for some $j$ are in the set ${\mathcal{S}}.$

We set ${\mathcal{S}}=I_R^n$ and ${\bf t}=(t_{R},...,t_{R})\in {\mathbb{Z}}^n,$ where 
\begin{equation}\label{t_R}
t_{R}\ \text{is\ the\ integer\ } 2^{R-1}+2^{R-2}.
\end{equation}
Using the previous attack, for any $n$ and $m<n$ and any matrix $A,$ we always get a solution in ${\mathcal{S}}$ (here we assume that $m\not =n$ since if $m=n$ the compact knapsack problem is easy). The situation becomes harder if we consider groups of $\{x_j\}_j$'s having different bits. Assume that, $R$ and $n$ are even integers. For instance, if the first $n/2$ entries of ${\bf x}=(x_j)_j$ have $R-$ bits and the other half have $R/2-$ bits, we get a solution having (on average) the half of entries in $I_{R}\cup I_{R/2}.$ We used the target vector, ${\bf t}=(t_{R},..,t_{R},t_{R/2},...,t_{R/2}),$ where the first $n/2$ entries are equal to $t_R$ and the rest to $t_{R/2}.$
\begin{center}
\texttt{CVP-attack}
\end{center}
{\footnotesize{
\noindent
{\bf Input}: \\ $\bullet$ $\mathcal{S}$ a finite subset of ${\mathbb{ Z}}^n,$\\
$\bullet$ $A\in {\mathcal{M}}_{m\times n}({\mathbb{Z}})$ \\
$\bullet$ ${\bf C}$ is an integer vector of ${\mathbb{Z}}^m$ such that the linear system $A{\bf X}^T={\bf C}^T$ has a solution in ${\mathcal{S}}$ \\
$\bullet$ a target vector ${\bf t}\in {\bf R}^{n}\cap {\mathcal{S}}$ and \\ 
$\bullet$ a positive integer $\alpha$.
\ \\
{\bf Output}: \\A solution ${\bf x}\in {\mathcal{S}}$ such that $A{\bf x}^{T}={\bf C}^T$ or in the worst case returns a solution that satisfies some constraints (i.e. some coordinates are in ${\mathcal{S}}$) \\\\
	\texttt{01.}   {\texttt{compute a solution ${\bf y}$ of $A{\bf Y}^T={\bf C}^T$}}\ (see subsection \ref{a_solution_to_a_linear_system})\\
	    \texttt{02.}   {\texttt{compute a basis $B$ of the lattice 
    $L=\{ {\bf x}\in{\mathbb Z}^{n}: A{\bf x}^T={\bf 0} \}$ }}\\
    \texttt{03.}   $B\leftarrow LLL(B)$\\
	\texttt{04.}   {\texttt{${\bf b}\leftarrow 
	CVP( L(B),{\bf t})$}}\\
	\texttt{05.}   {\texttt{return the best vector (i.e. the one that meets more constraints) from the set $$\{{\bf y}+j{\bf b}:j=-\alpha,\dots ,\alpha\}$$}}
}}
	In line 01, we use the results of subsection \ref{a_solution_to_a_linear_system} in order to compute an integer solution of the system $A{\bf Y}^T={\bf C}^T.$ In the next line one way to compute a basis is to use Smith Normal Form (SNF) of the matrix $A$. In fact, if $S$ is the SNF of $A,$ then there are $P,Q$ unimodular matrices such that $S=PAQ.$ Then it is proved that, the last $n-rank(A)$ columns of $Q$ is a basis of the lattice $AX={\bf 0}$ (see the Appendix). In the next line, 03, we compute the LLL of the basis $B$ of $L.$ The complexity of this step is polynomial, since LLL is polynomial and floating point versions are very fast in practice. In line 04, we solve CVP for the lattice $L.$ In practice, we use Babai algorithm \cite[Chapter 18]{galbraith}. 		
\subsubsection{Experiments}  We provide some experiments\footnote{For the code see \url{https://github.com/drazioti/compact-knapsack}} based on the previous attack. Let $R$ and $n$ be a positive integers. We set,
\begin{equation}\label{set_S}
\mathcal{S}_{\alpha_1,\alpha_2,...,\alpha_k}(n,R) =I_{\alpha_1\cdot R}^{n/k} \times I_{\alpha_2\cdot R}^{n/k}  \times \cdots \times I_{\alpha_k\cdot R}^{n/k}\subset {\mathbb{Z}}^n,\ \text{for\ some}\ k|n, 
\end{equation}
and $\alpha_j>0\ (j=1,2,...,k),$ such that $\alpha_jR\in {\mathbb{Z}}.$ For instance,
 ${\mathcal{S}}_{\alpha,\beta}(n,R)=I_{\alpha\cdot R}^{n/2}\times I_{\beta\cdot R}^{n/2}$ and 
 $$\mathcal{S}_1(n,R)=I_R^{n}=\{{\bf x}=(x_i)\in {\mathbb{Z}}^n: 2^{R-1}\le x_i\le 2^{R}-1\}.$$ 
 As previous $m$ is the number of equations and $n$ the number of unknowns. If ${\bf x}\in \mathcal{S}_{\alpha_1,\alpha_2,...,\alpha_k}(n,R),$ then 
$${\bf t}=(t_{\alpha_1\cdot R},...,t_{\alpha_1\cdot R};t_{\alpha_2\cdot R},...,t_{\alpha_2\cdot R};...;t_{\alpha_k\cdot R},...,t_{\alpha_k\cdot R}),$$
where $t_{sR}$ was defined in (\ref{t_R}).
The distance between two consecutive \texttt{;} is $n/k-$entries.
\begin{table}[h]
\begin{center}
\begin{tabular}{|c|c|c|c|c|c|c|} 
\hline
$ $ & $n=50 $ & $n=50 $ & $n=50 $ & $n=48 $ & $n=48 $ & $n=50 $\\
\hline
$m$ &  $\mathcal{S}_1$ &  $\mathcal{S}_{1,1/2}$ & $\mathcal{S}_{\frac{1}{2},\frac{1}{4}}$ & $\mathcal{S}_{\frac{1}{2},\frac{1}{4},\frac{1}{8}}$  & $\mathcal{S}_{1,\frac{1}{2},\frac{1}{4},\frac{1}{8}}$ & $\mathcal{S}_{\big(1,\frac{1}{2},\frac{1}{4},\frac{1}{8},\frac{1}{16}\big)}$     		 \\ 
\hline  \hline
 $1$ &   $100\% $ & $50\%$ & $50\%$	& $63.3\%$ & $ 71.25\% $ & $ 75.6\% $\\ 
\hline
  $2$ &  $100\% $ & $50\%$ & $50\%$& $58.8\%$ & $ 59.4\%$  & $65.6\% $ \\ 
\hline
  $10$ & $100\%$ & $50\%$ & $50\%$	& $35.8\%$ & $34.3\% $  & $  32.6\%$ \\ 
\hline
  $20$ & $100\% $ & $50\%$ & $50\%$	& $33.3\% $ & $26.6\% $  & $  24.8\%$\\ 
\hline
  $32$ & $100\% $ & $50\%$ & $50\% $	& $33.3\% $ & $26.45\%$  & $ 24\% $\\ 
\hline
  $40$ & $100\% $ & $50\%$ & \textcolor{red}{$52\%$} & $100\% $ & $27.7\% $	 & $  23.2\%$  \\ 
\hline
\end{tabular}
\end{center}
\caption{For the second, third, fourth and last column $R=80$, $n=50,$ and $A\xleftarrow{\$} M_n(I_{R/8}).$ We executed 20 random instances for each row. The red text indicates that, whereas we have on average $<100\%$ successes, the attack found at least one solution. For the fifth and sixth columns $R=80$ and $n=48$. For all the examples we pick $\alpha=10$.}\label{Table:1}
\end{table} 

\begin{table}[h]
\begin{center}
\begin{tabular}{|c|c|c|c|} 
\hline
$ $ & $n=54$ & $n=70$ & $n=64$ \\
$ $ & $R=64$ & $R=128$ & $R=256$ \\
\hline
$m$ &  $\mathcal{S}_{\big(1,\frac{1}{2},\frac{1}{4},\frac{1}{8},\frac{1}{16},\frac{1}{32}\big)}$ &  $\mathcal{S}_{\big(1,\frac{1}{2},\frac{1}{4},\frac{1}{8},\frac{1}{16},\frac{1}{32},\frac{1}{64}\big)}$ & $\mathcal{S}_{\big(1,\frac{1}{2},\frac{1}{4},\frac{1}{8},\frac{1}{16},\frac{1}{32},\frac{1}{64},\frac{1}{128}\big)}$     		 \\ 
\hline  \hline
 $1$ &   $78.7\%$ & $82\%$ & $84.4\%$ \\ 
\hline
  $2$ &  $65.7\% $ & $67\%$ & $72.1\%$ \\ 
\hline
  $10$ & $31.2\%$ & $34.9\%$ & $33.5\%$\\ 
\hline
  $20$ & $24.1\%$ & $26.4\%$ & $28.5\%$ \\ 
\hline
  $32$ & $22.1\% $ & $24.3\% $ & $25.5\% $  \\ 
\hline
  $40$ & $22.9\% $ & $22\%$ & $23.8\%$\\ 
\hline
\end{tabular}
\end{center}
\caption{$A\xleftarrow{\$} M_n(I_{R/8}).$ We executed $50$ random instances for each row.}\label{Table:2}
\end{table} 

We remark (see Table \ref{Table:1}) that the problem is slightly easier for some solution spaces (columns 2,3,4) and for some other spaces is harder, when we increase the number of equations $m$ (and keeping the same $n,R$). Finally, from the previous two tables we conclude that all the solution spaces provide evidences that compact knapsack is harder if we consider a (non trivial) system than a single linear equation.

\subsection{An improvement of the previous algorithm}
We notice from the previous experiments that it is easy to solve compact knapsack when the solution space is $I_R^n.$ I.e. we pick ${\bf x}$ having each entry exactly $R$ bits. So, we can try a {\it{divide and conquer}} approach. We can create sub tasks where each one solves a compact knapsack problem to the $t-$ bits space (for some finite values of $t$). Finally, we will merge all such sub solutions in order to get a solution of the initial system.
We suppose that the solution space is of the form 
$\mathcal{S}_{\alpha_1,\alpha_2,...,\alpha_k}(n).$ Then we split the 
initial matrix $A\in {\mathcal{M}}_{m\times n}$ to 
$[A_1|A_2|\cdots|A_k],$ where $A_i$ is $m\times n/k.$ Then, we randomly pick some integer vector ${\bf c}_i$ $(m\times 1)$ and we are looking for solutions in $I_{\alpha_i R}^n$ for the system $A_iX={\bf c}_i.$ Now, if the random choice is {\it{good}} enough, then there is room for improvement in the previous attack. We provide the pseudocode\footnote{An implementation in Sagemath can be found in : \url{https://github.com/drazioti/compact-knapsack}}.

\begin{center}
\texttt{Divide and Conquer Attack}
\end{center}
{\footnotesize{
\noindent
{\bf Input}:\\
$\bullet$ $R$ is a positive integer\\
 $\bullet$ ${\mathcal{S}}=\mathcal{S}_{\alpha_1,\alpha_2,...,\alpha_k}$ a finite subset of ${\mathbb{ Z}}^n,$\\
$\bullet$ $A\in {\mathcal{M}}_{m\times n}({\mathbb{Z}})$ \\
$\bullet$ ${\bf C}$ is an integer vector of ${\mathbb{Z}}^m$ such that the linear system $A{\bf X}^T={\bf C}^T$ has a solution in ${\mathcal{S}}$  \\ 
$\bullet$ $\alpha$ a positive integer\\
$\bullet$ $(\beta_2,...,\beta_k)$ some positive integers such that $\beta_i R$ is integer\\
\ \\
{\bf Output}: \\ A solution ${\bf x}\in {\mathcal{S}}$ such that $A{\bf x}^{T}={\bf C}^T$ or in the worst case returns a solution that satisfies some constraints (i.e. some coordinates are in ${\mathcal{S}}$)\\\\
    \texttt{01.} {\texttt{Initialize a list ${\mathcal{N}}=[]$}}\\
    \texttt{02.} ${\bf{h}}_i^T \xleftarrow{\$} I_{\beta_i R}^{m}$ {\texttt{for $i=2,...,k$}{\texttt{ \# ${\bf{h}}_i^T$ are column matrices with $m$ entries. $\beta_i$ may be chosen equal to $\alpha_i$ but it is not necessary.}}}\\
    \texttt{03.} {\texttt{${\bf h}_1^T \leftarrow {\bf C}^T- \sum_{j=2}^{k}{\bf {h}}_j^{T}$}}\\
    \texttt{04.} {\texttt{Split $A=[A_1|A_2|...|A_k]$ to blocks $A_i,$ such that $A_i \in {\mathcal{M}}_{m\times \frac{n}{k}}({\mathbb{Z}})$}\\
    \texttt{05.}   ${\bf t}_i\leftarrow (t_{\alpha_i\cdot R},...,t_{\alpha_i\cdot R})$ for $i=1,2,...,k$} {\texttt{\# ${\bf t}_i$'s are the target vectors having $n/k$ entries }\\
    \texttt{06.}   {\texttt{For $i$ in $\{1,2,...,k\}$}}\\
	\texttt{07.}    \hspace{0.7cm} {\texttt{compute a solution ${\bf y}_i\in{\mathbb Z}^{n/k}$ of $A_i{\bf Y}^T={\bf h}_i^T$}}\\
	\texttt{08.} \hspace{0.7cm}  \texttt{compute a basis $B_i$ of the lattice 
    $L_i=\{ {\bf x}\in{\mathbb Z}^{n/k}: A_i{\bf x}^T={\bf 0} \}$}\\
	\texttt{09.} \hspace{0.7cm}  \texttt{${\bf b}_i\leftarrow 
	CVP( L(B_i),{\bf t}_i)$}\\
	\texttt{10.}    \hspace{0.7cm} \texttt{compute the best vector ${\bf x}_i$ \# i.e. the one that meets more constraints from the set $\{{\bf y}_i+j{\bf b}_i:j=-\alpha,\dots ,\alpha\}$}\\
	\texttt{11.}    \hspace{0.7cm} ${\mathcal{N}}\leftarrow \mathcal{N}\cup \{{\bf x}_i\}$ \\
	\texttt{12.} ${\bf x}\leftarrow ({\bf x}_1,{\bf x}_2,...,{\bf x}_k)$, where ${\mathcal{N}}=\{{\bf x}_1,...,{\bf x}_k\}$\\
	\texttt{13.}   return ${\bf x}$
}}}

\begin{table}[h!tb]
\begin{center}
\begin{tabular}{|c|c|c|c|} 
\hline
$ $ &  $n=50 $ & $n=50 $ & $n=48 $\\
\hline
$m$ &   $\mathcal{S}_{1,1/2}$ & $\mathcal{S}_{\frac{1}{2},\frac{1}{4}}$ & $\mathcal{S}_{\frac{1}{2},\frac{1}{4},\frac{1}{8}}$      		 \\ 
\hline  \hline
 $1$ &    \textcolor{brown}{$50\%$}-\textcolor{magenta}{$64.5\%$} & \textcolor{brown}{$50\%$}-\textcolor{magenta}{$64.5\%$}	& \textcolor{brown}{$63.3\%$}-\textcolor{magenta}{$55\%$} \\ 
\hline
  $2$ &  \textcolor{brown}{$50\%$}-\textcolor{magenta}{$62.5\%$} & \textcolor{brown}{$50\%$}-\textcolor{magenta}{$64\%$} & \textcolor{brown}{$58.8\%$}-\textcolor{magenta}{$52\%$} \\ 
\hline
  $10$ & \textcolor{brown}{$50\%$}-\textcolor{magenta}{$62.2\%$} & \textcolor{brown}{$50\%$}-\textcolor{magenta}{$63\%$} & \textcolor{brown}{$35.8\%$}-\textcolor{magenta}{$43\%$}  \\ 
\hline
  $20$ &  \textcolor{brown}{$50\%$}-\textcolor{magenta}{$52\%$} & \textcolor{brown}{$50\%$} -\textcolor{magenta}{$50\%$}	& \textcolor{brown}{$33.3\%$}-\textcolor{magenta}{$33\%$}  \\ 
\hline
  $32$ &  \textcolor{brown}{$50\%$}-\textcolor{magenta}{$50\%$}& \textcolor{brown}{$50\%$}-\textcolor{magenta}{$50\%$}	& \textcolor{brown}{$33.3\%$}-\textcolor{magenta}{$33\%$} \\ 
\hline
  $40$ &  \textcolor{brown}{$50\%$}-\textcolor{magenta}{$50\%$} & \textcolor{brown}{$50\%$}-\textcolor{magenta}{$50\%$} & \textcolor{brown}{$100\%$}-\textcolor{magenta}{$33\%$}   \\ 
\hline
\end{tabular}
\end{center}
\caption{With the brown color we set the same data as in Table \ref{Table:1}.	With magenta we indicate the results of {\it{Divide and Conquer}} Attack. We executed 20 random instances for each row. For all the examples we chose $\alpha=10$ and $\beta_i=\alpha_i$, $i\ge 2$. For  $\mathcal{S}_{1,\frac{1}{2},\frac{1}{4},\frac{1}{8}}$ and $\mathcal{S}_{\big(1,\frac{1}{2},\frac{1}{4},\frac{1}{8},\frac{1}{16}\big)}$ we did not find any improvement.} \label{Table:3}
\end{table}

\section{A three move id-scheme based on compact knapsack}\label{section-4}

The Id-scheme that we propose, based on compact knapsack consists of three moves. Suppose that Alice (the prover) holds a secret key and sends a message to Bob (the verifier). We call this step,  \textit{commitment}. Bob responds with a random string, the \textit{challenge} (or \textit{exam}) and Alice provides a \textit{response}. In the last step, Bob applies a verification algorithm which has as inputs the public key of Alice and the previous conservation, in order to decide if he will accept or reject Alice's id. The length of the challenge is the security parameter. Our aim is to provide a proof of knowledge for the compact knapsack problem.

Let $A$ be an $m\times n$ matrix and we consider the linear system $ AX={\bf b}^T.$ Let ${\bf x}$ belongs to a set 	${\mathcal{S}}\subset {\mathbb{Z}}^n.$ We assume that compact knapsack is difficult to be solved in ${\mathcal{S}}.$ 
Furthermore, we choose another set ${\mathcal{S}}'\subset {\mathbb{Z}}^n,$ where the compact knapsack problem is also hard on ${\mathcal{S}}'.$ What is more, we make the following assumption\footnote{We shall clarify this assumption in subsection (\ref{subsub1}).}:
\begin{equation}\label{assumption}
Pr( ({\bf x},{\bf k})\xleftarrow{\$} {\mathcal{S}}\times {\mathcal{S'}} : {\bf x} + {\bf k}\in {\mathcal{S}} )\approx 1.
\end{equation}
The quadruple $(A,{\bf b},{\mathcal{S}},{\mathcal{S}}')$ is public.  Therefore, the previous generation algorithm $G$, on input a random seed generates a public key $pk=(A,{\bf b},{\mathcal{S}},{\mathcal{S}}')$ and a secret key ${\bf x}\in {\mathcal{S}},$ such that $A{\bf x}^T={\bf b}^T.$ Since, we assumed that compact knapsack is difficult for the triple $(A,{\bf b},{\mathcal{S}}),$ we say that $G$ is one way (this is the definition 19.6 of \cite{BonehShoup}).

The following id-scheme is repeated $t$-times. The challenge space ${\mathcal{C}}$ is of the form $\{0,1\}^t,$ where $t$ is the security parameter and since $1/|{\mathcal{C}}|$ is negligible\footnote{A function $f:{\mathbb{N}}\rightarrow {\bf {R}}$ is called negligible if and only if $|f(n)|
<1/n^c$ for large $n$ and some positive constant $c.$ We also call such a function $f$ super-poly.} for large $t$ we say that ${\mathcal{C}}$ is large. 

\begin{itemize}
    \item Alice picks a random vector $\textbf{k} \in {\mathcal{S}}'$. Then, she computes
    $$ A \cdot \textbf{k}^T = \textbf{r}^T$$
    and sends it to Bob (\textit{commitment}).
    \item Bob picks a random bit $e$ and sends it to Alice (\textit{challenge}).
    \item Alice computes
    $$ \textbf{s} = \textbf{k} + e \textbf{x}$$
    and sends $\textbf{s}$ to Bob (\textit{response}).
    \item Bob verifies the equality $A \cdot \textbf{s}^T = (\textbf{r} + e \textbf{b})^T$ and that $\textbf{s} \in {\mathcal{S}}'$ if $e=0$. Now, if $e=1$, our assumption (\ref{assumption}) allows Alice to choose from the beginning the set ${\mathcal{S}}$ and ${\mathcal{S}}'$, such that $\textbf{s} \in {\mathcal{S}}$ with large probability $(\approx 1)$.
\end{itemize}
See also Fig. \ref{id_scheme}.\\
\noindent \textit{Proof of correctness.}
$$ A \cdot \textbf{s}^T = A \cdot \textbf{k}^T + e A \cdot \textbf{x}^T = (\textbf{r} + e \textbf{b})^T$$
Also, $\textbf{s}$ satisfies the constraints of the scheme. Indeed, if $e=0,$ then ${\bf s}={\bf k}\in {\mathcal{S}'},$ and if $e=1$ then ${\bf s}={\bf k}+{\bf x}$ which belongs to ${\mathcal{S}}$ with large probability (this is from assumption (\ref{assumption})).
\begin{figure}[h!tb]
\centering
\rotatebox{0}{\scalebox{0.37}{\includegraphics{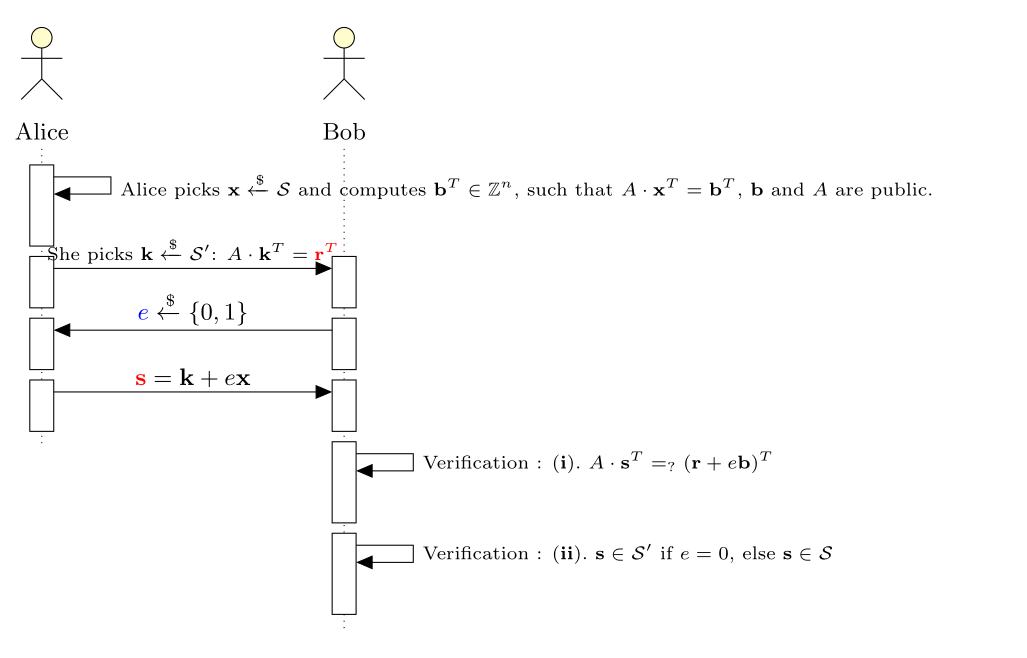}}}
\caption{Compact Knapsack ID-Scheme. Alice sends the computed values in red to Bob. Bob does not know ${\bf k}$ and ${\bf x}.$}
\label{id_scheme}
\end{figure}
%\begin{remark}
%If it may happens ${\mathcal{S}}'\subset {\mathcal{S}}$ (or the inverse), then the verification step is simplified, since whatever $e$ is chosen, we always have ${\bf s}\in {\mathcal{S}}.$ 
%\end{remark}
\subsubsection{How to choose ${\mathcal{S}}$ and ${\mathcal{S}}'$?}\label{subsub1}
We shall choose  ${\mathcal{S}}$ and ${\mathcal{S}}'$. Both sets are defined according to relation (\ref{set_S}) i.e.
${\mathcal{S}}_{\alpha_1,\alpha_2,\dots,\alpha_k}(n,R)$. We start with some auxiliary lemmata.
\begin{lemma}\label{lemma}
Let $0<a<b<c<d$ integers and $N_1=[a,b]\cap {\mathbb{Z}},\ N_2=[c,d]\cap {\mathbb{Z}}.$ Suppose that $b\leq d-c+1.$ Let also $x_1,x_2$ are chosen uniformly from $N_1,N_2$, respectively.
Then, 
$$Pr(x_1+x_2\not \in N_2) = \frac{(a+b)}{2(d-c+1)}.$$
\end{lemma}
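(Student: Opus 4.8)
The plan is to evaluate the complementary event by a direct counting argument over pairs $(x_1,x_2)\in N_1\times N_2$. The first observation I would make is that the lower endpoint of $N_2$ plays no role: since $x_1\geq a>0$ (so $a\geq 1$) and $x_2\geq c$, the sum satisfies $x_1+x_2\geq a+c\geq c+1>c$, and hence $x_1+x_2$ can never fall below $c$. Therefore the event $\{x_1+x_2\notin N_2\}$ is exactly the event $\{x_1+x_2>d\}$, and it suffices to count integer pairs with $x_1+x_2>d$ and divide by the total number of pairs $|N_1|\cdot|N_2|=(b-a+1)(d-c+1)$.

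Next I would fix $x_1\in\{a,a+1,\dots,b\}$ and count the admissible $x_2$. The inequality $x_1+x_2>d$ is equivalent to $x_2\geq d-x_1+1$, so the admissible $x_2$ are precisely the integers in the window $\{d-x_1+1,\dots,d\}$. This is the step where the hypothesis $b\leq d-c+1$ is essential: it forces $x_1\leq d-c+1$, hence $d-x_1+1\geq c$, so the entire window lies inside $N_2=\{c,\dots,d\}$ and nothing is truncated at the lower end. The window then contains exactly $x_1$ integers, independently of any capping.

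Finally, summing this count over $x_1$ from $a$ to $b$ and normalising by the number of pairs yields
$$Pr(x_1+x_2\notin N_2)=\frac{1}{(b-a+1)(d-c+1)}\sum_{x_1=a}^{b}x_1=\frac{1}{(b-a+1)(d-c+1)}\cdot\frac{(a+b)(b-a+1)}{2},$$
where I have used the arithmetic–series identity $\sum_{x_1=a}^{b}x_1=\frac{(a+b)(b-a+1)}{2}$. Cancelling the factor $(b-a+1)$ gives the claimed value $\frac{a+b}{2(d-c+1)}$.

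I expect the only genuine subtlety to be the per-$x_1$ count of admissible $x_2$. One must confirm that the window $\{d-x_1+1,\dots,d\}$ never pokes out below $c$; this is exactly what $b\leq d-c+1$ guarantees. Without that assumption the count for large $x_1$ would saturate at $d-c+1$ instead of $x_1$, the summand would no longer be linear in $x_1$, and the clean closed form would fail. Everything else is a routine reduction and an arithmetic–series evaluation.
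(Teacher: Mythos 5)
Your proposal is correct and follows essentially the same route as the paper's proof: fix $x_1$, count the admissible $x_2$ in the window $\{d-x_1+1,\dots,d\}$ (exactly $x_1$ of them), sum over $x_1\in\{a,\dots,b\}$, and normalise by $(b-a+1)(d-c+1)$. Yours is in fact tidier on two points the paper glosses over: you state explicitly why the sum can never fall below $c$ (so the event reduces to $x_1+x_2>d$) and where the hypothesis $b\leq d-c+1$ is invoked, and you write the arithmetic series correctly as $\frac{(a+b)(b-a+1)}{2}$, whereas the paper loosely records it as $\frac{a+b}{2}$ (the per-element average) before normalising.
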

\begin{proof}
We follow \cite[Lemma 6.2]{draz2}. First we calculate the number of pairs $(x_1, x_2)\in N_1\times N_2$ such that, $x_1 + x_2 > d.$ We fix for a moment $x_1=a,$ then in order to have $x_1+x_2>d$ or $x_2>d-a$ we count the $x_2'$s in $N_2$ that satisfies $x_2>d-a.$ So we start from $d-a+1$ and we count until $d.$ Overall, $d-(d-a+1) + 1= a.$ Similar if $x_1=a+1$ we get $a+1$ possible $x_2$'s and so on, overall we get 
$$a+(a+1)+\cdots +b = \frac{a+b}{2}.$$  The number of elements in $N_2$ are $d-c+1,$ so the Lemma follows.
\end{proof}
\begin{corollary}
Suppose that $\beta\leq \frac{1}{R\ln{2}} \ln(2^{\alpha R-1} +1).$ Let also, $x_1\xleftarrow{\$} {\mathcal{S}}_{\beta}(1,R)$ and $x_2\xleftarrow{\$} {\mathcal{S}}_{\alpha}(1,R)$ where $\alpha,\beta<1$ positive real numbers such that $\alpha R,\beta R $ are positive integers.
Then, $$Pr(x_1+x_2\not \in {\mathcal{S}}_{\alpha})=\frac{3\cdot 2^{\beta R-1}-1}{2^{\alpha R}}.$$
\end{corollary}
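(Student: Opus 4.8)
The plan is to reduce the statement directly to Lemma~\ref{lemma} by recognizing the two solution spaces as integer intervals of the type $N_1,N_2$ that appear there. First I would unwind the definitions: by~(\ref{set_S}) the one-dimensional space $\mathcal{S}_\beta(1,R)$ is simply $I_{\beta R}$, the set of integers with exactly $\beta R$ bits, i.e. $\{y\in\mathbb{Z}:2^{\beta R-1}\le y\le 2^{\beta R}-1\}$, and likewise $\mathcal{S}_\alpha(1,R)=I_{\alpha R}=\{y\in\mathbb{Z}:2^{\alpha R-1}\le y\le 2^{\alpha R}-1\}$. Setting
$$a=2^{\beta R-1},\quad b=2^{\beta R}-1,\quad c=2^{\alpha R-1},\quad d=2^{\alpha R}-1,$$
puts $x_1$ uniform on $N_1=[a,b]\cap\mathbb{Z}$ and $x_2$ uniform on $N_2=[c,d]\cap\mathbb{Z}$, so the corollary is exactly an instance of the lemma with these parameters, and the event $x_1+x_2\notin\mathcal{S}_\alpha$ is the event $x_1+x_2\notin N_2$.

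Second, I would verify the hypotheses of Lemma~\ref{lemma}. The crucial observation, which I expect to be the one subtle point, is that the transcendental-looking bound on $\beta$ is nothing but the lemma's condition $b\le d-c+1$ in disguise. Indeed $d-c+1=(2^{\alpha R}-1)-2^{\alpha R-1}+1=2^{\alpha R-1}$, so $b\le d-c+1$ reads $2^{\beta R}-1\le 2^{\alpha R-1}$, i.e. $2^{\beta R}\le 2^{\alpha R-1}+1$; taking logarithms and dividing by $R\ln 2$ yields precisely $\beta\le\frac{1}{R\ln 2}\ln(2^{\alpha R-1}+1)$. I would then confirm the ordering $0<a<b<c<d$: the inequality $c<d$ is immediate, $a<b$ holds for $\beta R\ge 2$, and $b<c$ follows because $b\le d-c+1=c$ with equality excluded by a parity check, since $b=2^{\beta R}-1$ is odd while $c=2^{\alpha R-1}$ is even once $\alpha R\ge 2$. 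Care is needed here to secure the \emph{strict} inequality $b<c$ rather than merely $b\le c$, so that $N_1$ and $N_2$ are genuinely ordered as the lemma demands.

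Third, I would invoke Lemma~\ref{lemma} verbatim to get $Pr(x_1+x_2\notin N_2)=\frac{a+b}{2(d-c+1)}$ and simplify: the numerator is $a+b=2^{\beta R-1}+2^{\beta R}-1=3\cdot 2^{\beta R-1}-1$ and the denominator is $2(d-c+1)=2\cdot 2^{\alpha R-1}=2^{\alpha R}$, giving the claimed value $\frac{3\cdot 2^{\beta R-1}-1}{2^{\alpha R}}$. Finally I would note the degenerate case $\beta R=1$, where $N_1=\{1\}$ collapses to a point and $a=b$; the pair-counting argument in the proof of Lemma~\ref{lemma} still applies unchanged, and a direct substitution confirms the formula remains valid there, so no separate treatment is really required.
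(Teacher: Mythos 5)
Your proposal is correct and follows essentially the same route as the paper: identify $\mathcal{S}_\beta(1,R)=N_1$ and $\mathcal{S}_\alpha(1,R)=N_2$ with $a=2^{\beta R-1}$, $b=2^{\beta R}-1$, $c=2^{\alpha R-1}$, $d=2^{\alpha R}-1$, observe that the hypothesis on $\beta$ is exactly the lemma's condition $b\le d-c+1$, and simplify $\frac{a+b}{2(d-c+1)}$ to the stated value. Your additional checks (the strict ordering $0<a<b<c<d$ via the parity argument, and the degenerate case $\beta R=1$) are details the paper silently omits, but they do not change the argument.
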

\begin{proof}
Since ${\mathcal{S}}_{\alpha}(1,R) = I_{\alpha R}=N_2$ and ${\mathcal{S}}_{\beta}(1,R)=I_{\beta R}=N_1,$ we set $a=2^{\beta R -1}, b=2^{\beta R} -1,$ $c=2^{\alpha R -1}, d=2^{\alpha R} -1$ and we apply Lemma \ref{lemma}. 
Since, 
$$b\le d-c+1 \ \text{or}\ 2^{\beta R}-1\le 2^{\alpha R}-2^{\alpha R-1}\ \text{or}\ \beta\leq \frac{1}{R\ln{2}} \ln(2^{\alpha R-1} +1),$$ the hypothesis of the previous Lemma is satisfied. Thus,
$$Pr(x_1+x_2\not \in {\mathcal{S}}_{\alpha})=\frac{2^{\beta R-1}+2^{\beta R}-1}{2^{\alpha R + 1}-2^{\alpha R}}.$$
The result follows.
\end{proof}

\begin{corollary}
Suppose that $\beta\leq \frac{1}{R\ln{2}} \ln(2^{\alpha R-1} +1),$ where $\alpha,\beta<1$ positive real numbers such that $\alpha R$ and $\beta R $ are positive integers. Let also, ${\bf x}_1\xleftarrow{\$} {\mathcal{S}}_{\beta }(n,R)=I_{\beta R}^n$ and ${\bf x}_2\xleftarrow{\$} {\mathcal{S}}_{\alpha}(n,R)=I_{\alpha R}^n.$
Then, $$Pr({\bf x}_1 + {\bf x}_2\in {\mathcal{S}}_{\alpha}(n))=\Big(1-\frac{3\cdot 2^{\beta R-1}-1}{2^{\alpha R}}\Big)^n.$$
\end{corollary}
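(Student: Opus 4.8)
The plan is to reduce this multivariate statement to the single-coordinate case already established in the previous corollary, using the product structure of the sets involved together with the independence of the coordinates. First I would observe that $\mathcal{S}_{\alpha}(n,R) = I_{\alpha R}^n$ is a Cartesian product, so membership of the sum ${\bf x}_1 + {\bf x}_2$ in this set is equivalent to the conjunction of $n$ coordinatewise conditions: writing ${\bf x}_1 = (x_{1,1},\dots,x_{1,n})$ and ${\bf x}_2 = (x_{2,1},\dots,x_{2,n})$, we have ${\bf x}_1 + {\bf x}_2 \in I_{\alpha R}^n$ if and only if $x_{1,i} + x_{2,i} \in I_{\alpha R}$ for every $i = 1,\dots,n$.

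Next I would note that sampling ${\bf x}_1$ uniformly from $I_{\beta R}^n$ and ${\bf x}_2$ uniformly from $I_{\alpha R}^n$ means that the $2n$ coordinates are mutually independent, with each $x_{1,i}$ uniform on $I_{\beta R}$ and each $x_{2,i}$ uniform on $I_{\alpha R}$. Consequently the $n$ events $E_i := \{x_{1,i} + x_{2,i} \in I_{\alpha R}\}$ depend on disjoint blocks of coordinates and are therefore mutually independent.

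The key step is then to invoke the previous corollary in each coordinate. Because the hypothesis $\beta \le \frac{1}{R\ln 2}\ln(2^{\alpha R - 1} + 1)$ is exactly the condition under which that corollary applies, each single-coordinate pair $(x_{1,i}, x_{2,i})$ satisfies $Pr(x_{1,i} + x_{2,i} \notin I_{\alpha R}) = \frac{3 \cdot 2^{\beta R - 1} - 1}{2^{\alpha R}}$, and hence $Pr(E_i) = 1 - \frac{3 \cdot 2^{\beta R - 1} - 1}{2^{\alpha R}}$. Finally, by independence I would conclude $Pr({\bf x}_1 + {\bf x}_2 \in \mathcal{S}_{\alpha}(n)) = Pr\left(\bigcap_{i=1}^n E_i\right) = \prod_{i=1}^n Pr(E_i)$, which equals the claimed $n$-th power.

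I do not expect a genuine obstacle here, since the argument is essentially bookkeeping built on the product structure of the solution space. The only point requiring a moment's care is verifying that the single-coordinate hypothesis in the previous corollary is met simultaneously in every coordinate; this holds because that hypothesis is precisely the standing assumption on $\beta$ and does not depend on the index $i$, so the same per-coordinate probability applies uniformly across all $n$ factors.
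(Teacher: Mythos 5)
Your proposal is correct and follows exactly the route the paper intends: the paper's proof is just the remark ``Easy to check from the previous Corollary,'' which tacitly relies on the same coordinatewise decomposition, independence of the $n$ coordinate pairs, and the per-coordinate probability from the preceding corollary that you spell out explicitly. Your write-up merely makes the bookkeeping explicit, including the observation that the hypothesis on $\beta$ is coordinate-independent, so there is nothing to change.
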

\begin{proof}
Easy to check from the previous Corollary.
\end{proof}

\begin{lemma}\label{second_lemma}
We set 	${\mathcal{S}}={\mathcal{S}}_{\alpha_1,\alpha_2,\dots,\alpha_k}(n,R)$ and ${\mathcal{S}}'={\mathcal{S}}_{\beta_1,\beta_2,\dots,\beta_k}(n,R),$ where $$\beta_i\leq \frac{1}{R\ln{2}} \ln(2^{\alpha_i R-1} +1)\ (i=1,2,...,k).$$ Then,
$$Pr\{({\bf x},{\bf k})\xleftarrow{\$} {\mathcal{S}}\times {\mathcal{S}}': {\bf x} + {\bf k} \in {\mathcal{S}}\} =\prod_{i=1}^{k} \Big(1-\frac{3\cdot 2^{\beta_i R-1}-1}{2^{\alpha_i R}}\Big)^{n/k}.$$
\end{lemma}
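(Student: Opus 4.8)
The plan is to exploit the product structure of both ${\mathcal{S}}$ and ${\mathcal{S}}'$ and to reduce the statement to the per-block estimate already furnished by the second Corollary. By definition (\ref{set_S}), both sets split into $k$ blocks of $n/k$ coordinates each: the $i$-th block of ${\mathcal{S}}$ is $I_{\alpha_i R}^{n/k}$ and the $i$-th block of ${\mathcal{S}}'$ is $I_{\beta_i R}^{n/k}$. Writing ${\bf x}=({\bf x}^{(1)},\dots,{\bf x}^{(k)})$ and ${\bf k}=({\bf k}^{(1)},\dots,{\bf k}^{(k)})$ for the corresponding block decompositions, I would first observe that drawing ${\bf x}$ uniformly from the Cartesian product ${\mathcal{S}}$ is the same as drawing each block ${\bf x}^{(i)}$ independently and uniformly from its factor $I_{\alpha_i R}^{n/k}$, and likewise for ${\bf k}$ with factors $I_{\beta_i R}^{n/k}$.

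Next I would note that membership ${\bf x}+{\bf k}\in{\mathcal{S}}$ is a coordinatewise, hence blockwise, condition: it holds if and only if ${\bf x}^{(i)}+{\bf k}^{(i)}\in I_{\alpha_i R}^{n/k}$ for every $i=1,\dots,k$. Because the pairs $({\bf x}^{(i)},{\bf k}^{(i)})$ are mutually independent across $i$, the probability of the conjunction factors as
\[
Pr\{({\bf x},{\bf k})\xleftarrow{\$}{\mathcal{S}}\times{\mathcal{S}}':{\bf x}+{\bf k}\in{\mathcal{S}}\}=\prod_{i=1}^{k}Pr\{{\bf x}^{(i)}+{\bf k}^{(i)}\in I_{\alpha_i R}^{n/k}\}.
\]

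For each fixed $i$, the pair $({\bf x}^{(i)},{\bf k}^{(i)})$ is exactly the configuration treated in the second Corollary, with dimension $n/k$ and parameters $\alpha=\alpha_i$, $\beta=\beta_i$; the hypothesis $\beta_i\le \frac{1}{R\ln 2}\ln(2^{\alpha_i R-1}+1)$ is precisely the one that Corollary demands, so its conclusion gives $Pr\{{\bf x}^{(i)}+{\bf k}^{(i)}\in I_{\alpha_i R}^{n/k}\}=\big(1-\tfrac{3\cdot 2^{\beta_i R-1}-1}{2^{\alpha_i R}}\big)^{n/k}$. Substituting these per-block probabilities into the product yields the claimed formula. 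The argument is essentially bookkeeping, and no genuinely new estimate is needed beyond the Corollaries; the only points requiring care are confirming that uniform sampling from a Cartesian product really renders the blocks independent and uniform on their factors, and that the hypothesis of the Corollary is separately satisfied for each index $i$. I expect the latter to be the mildest obstacle, since the $\beta_i$ are assumed block-by-block to obey the required inequality, so the verification is immediate.
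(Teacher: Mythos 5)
Your proposal is correct and follows essentially the same route as the paper's own proof: decompose ${\bf x}+{\bf k}\in{\mathcal{S}}$ into the blockwise conditions ${\bf x}^{(i)}+{\bf k}^{(i)}\in I_{\alpha_i R}^{n/k}$, use independence of the blocks to factor the probability into a product over $i=1,\dots,k$, and apply the preceding Corollary (with dimension $n/k$, $\alpha=\alpha_i$, $\beta=\beta_i$) to each factor. Your write-up is in fact slightly more explicit than the paper's, which states the blockwise equivalence and factorization without spelling out the independence argument.
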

\begin{proof}
Since, 
$$\beta_i\leq \frac{1}{R\ln{2}} \ln(2^{\alpha_i R-1} +1),\ \text{for}\ i=1,2,...,k$$ from the previous Corollary we get
$ Pr\big\{({\bf x},{\bf k})\xleftarrow{\$} {\mathcal{S}}\times {\mathcal{S}}': {\bf x} + {\bf k}\in {\mathcal{S}}\big\}=$\
$$\prod_{i=1}^k Pr\Big\{({\bf x}_i,{\bf k}_i)\xleftarrow{\$} {\mathcal{S}}_{\alpha_i}(n/k,R)\times {\mathcal{S}}_{\beta_i}(n/k,R): {\bf x}_i + {\bf k}_i \in {\mathcal{S}}_{\alpha_i}(n/k,R)\Big\}.$$
Indeed, 
${\bf x} + {\bf k}\in {\mathcal{S}}\ \text{for } ({\bf x},{\bf k})\in{\mathcal{S}}\times {\mathcal{S}}' \ \textbf{\ if-f\ }$ 
$${\bf x}_i + {\bf k}_i \in {\mathcal{S}}_{\alpha_i}(n/k,R) \text{\ for\ all } ({\bf x}_i,{\bf k}_i)\in \mathcal{S}_{\alpha_i}(n/k,R)\times {\mathcal{S}}_{\beta_i}(n/k,R).$$
The Lemma follows.
\end{proof}

Now we have to choose $\beta_i, \alpha_i$ such that the relation (\ref{assumption}) is satisfied. 
Say $\varepsilon$ is a small positive real number, close to $0.$ Then, we set
\begin{equation}\label{probability}
\frac{3\cdot 2^{\beta_i R-1}-1}{2^{\alpha_i R}} = \varepsilon_i.
\end{equation}
After some calculations we get $:$
$$ \beta_i = \frac{1}{R\ln{2}}\ln \frac{\varepsilon_i 2^{\alpha_i R+1}+2}{3}.$$
Since, we want also $\beta_i\le \frac{1}{R\ln{2}} \ln(2^{\alpha_i R-1} +1)$ we choose $\varepsilon$ small enough in order to satisfy the previous inequality\footnote{It is enough to choose $\varepsilon_i<1/2.$}. 
Then, according to Lemma \ref{second_lemma}, 
\begin{equation}\label{basic_probability}
Pr\{({\bf x},{\bf k})\xleftarrow{\$} {\mathcal{S}}\times {\mathcal{S}}': {\bf x} + {\bf k} \in {\mathcal{S}}\}=(1-\varepsilon_1)^{n/k}\cdots (1-\varepsilon_k)^{n/k}\approx 1.
\end{equation}

\subsection{Sigma-Id-Schemes}
In cryptography we are interested in Sigma protocols, which are id schemes with some specific properties. We want our id scheme be :\\
$({\bf i}).$ Complete\footnote{This is in fact the {\it{proof of correctness}}}, $({\bf ii}).$ Special sound\footnote{in \cite[Section 19.4.1, p. 742]{BonehShoup}, this is called {\it{Knowledge soundness}}.}, and $({\bf iii}).$ Special Honest Verifier Zero Knowledge (S-HVZK). 

%$({\bf iii}).$ uniform over the {\texttt{responses}} space.\\
We will explain one by one what the previous statements mean, and we shall prove that our id scheme is a Sigma protocol.\\
$({\bf i}).$ The first basic property of a Sigma-id-scheme is {\it{completeness}}, i.e. an honest prover can always convince a verifier with some large probability $1-\alpha$. We call $\alpha$ {\it{completeness error}}. Using relation (\ref{basic_probability}) we can choose $\alpha$ be very small, in fact we can choose it $\alpha = 1 -n\varepsilon +O(\varepsilon^2)$ (we assumed that $\varepsilon_1=\cdots=\varepsilon_k=\varepsilon$). So for small $\varepsilon$, the verifier accepts with high probability.\\
$({\bf ii}).$ The second basic property is the {\it{special soundness}}.
Special sound means that, it is hard to compute two valid transcripts\footnote{I.e. they pass the verification test.} 
$({\texttt{commit}}, {\texttt{challenge}}, {\texttt{response}})$ such that the {\texttt{commitments}} are the same and the {\texttt{challenges}} are different. We shall show that this property is valid, in our id scheme, if compact knapsack is hard.
 We assume that ${\mathcal{S}},{\mathcal{S}}'$ are chosen as in (\ref{basic_probability}). Also, we set $\Sigma_{\bf b} = \{{\bf x}\in {\mathcal{S}}:A{\bf x}^T={\bf b}^T\}.$
\begin{proposition}
Let ${\mathcal{T}}_1,{\mathcal{T}}_2$ be two valid transcripts, of the previous id scheme, such that $\big( ({\bf r}_i), {\bf e}, ({\bf s}_i\big))$, $\big( ({\bf r}_i), {\bf e}', ({\bf s}'_i)\big)$ (resp.), $1\leq i\leq t,$ where $t$ is the number of iterations of the protocol. Let also ${\bf e}\not={\bf e}'.$ Then, we can efficiently find an element of ${\Sigma_{\bf b}},$ with high probability.
\end{proposition}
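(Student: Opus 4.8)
The plan is to exploit the linearity of the verification relation together with the fact that the two challenge strings differ in at least one coordinate. Write $\mathbf{e}=(e_1,\dots,e_t)$ and $\mathbf{e}'=(e'_1,\dots,e'_t)$. Since $\mathbf{e}\neq\mathbf{e}'$, there is an index $i$ with $e_i\neq e'_i$; without loss of generality $e_i=1$ and $e'_i=0$ (otherwise swap the roles of the two transcripts). Both transcripts being valid means that at round $i$ the verification equations hold, namely $A\mathbf{s}_i^T=(\mathbf{r}_i+\mathbf{b})^T$ with $\mathbf{s}_i\in\mathcal{S}$ coming from the branch $e_i=1$, and $A(\mathbf{s}'_i)^T=\mathbf{r}_i^T$ with $\mathbf{s}'_i\in\mathcal{S}'$ coming from the branch $e'_i=0$.

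First I would subtract these two equations. Because the commitments coincide, $\mathbf{r}_i$ cancels and we obtain $A(\mathbf{s}_i-\mathbf{s}'_i)^T=(e_i-e'_i)\mathbf{b}^T=\mathbf{b}^T$. Setting $\mathbf{x}^*=\mathbf{s}_i-\mathbf{s}'_i$ therefore yields a vector with $A(\mathbf{x}^*)^T=\mathbf{b}^T$, computed by a single vector subtraction and hence efficiently. In the symmetric case $e_i=0$, $e'_i=1$ one takes $\mathbf{x}^*=\mathbf{s}'_i-\mathbf{s}_i$ instead. This already places $\mathbf{x}^*$ in the solution set of the linear system, so it only remains to certify that $\mathbf{x}^*\in\mathcal{S}$, which is exactly the extra condition needed for $\mathbf{x}^*\in\Sigma_{\mathbf{b}}$.

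The heart of the argument is this last membership claim, and it is where the design of $\mathcal{S}$ and $\mathcal{S}'$ enters. We know $\mathbf{s}_i\in\mathcal{S}=\mathcal{S}_{\alpha_1,\dots,\alpha_k}(n,R)$ and $\mathbf{s}'_i\in\mathcal{S}'=\mathcal{S}_{\beta_1,\dots,\beta_k}(n,R)$, so $\mathbf{x}^*$ is a ``large'' vector minus a ``small'' one. Reasoning block by block over the $k$ coordinate groups, I would estimate the probability that the difference of an $I_{\alpha_j R}$-entry and an $I_{\beta_j R}$-entry remains in $I_{\alpha_j R}$; this is the mirror image, for subtraction, of Lemma \ref{lemma}, and taking the product over the $n$ coordinates reproduces an expression of the shape appearing in Lemma \ref{second_lemma}. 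With the choice of $\beta_i$ from (\ref{probability}) that already guarantees (\ref{basic_probability}), each block contributes a factor $\approx 1$, so that $\mathbf{x}^*\in\mathcal{S}$, and hence $\mathbf{x}^*\in\Sigma_{\mathbf{b}}$, with probability $\approx 1$.

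The main obstacle I anticipate is precisely this probabilistic step. The two transcripts are supplied by a possibly dishonest prover, so $\mathbf{s}_i$ and $\mathbf{s}'_i$ need not be uniform on $\mathcal{S}$ and $\mathcal{S}'$, whereas the subtraction lemma is most naturally phrased for uniform inputs. I would handle this by observing that the only way validity can fail to produce a witness in $\mathcal{S}$ is when some coordinate of $\mathbf{s}_i$ lies in the thin boundary region of $I_{\alpha_j R}$ where a downward shift by the corresponding entry of $\mathbf{s}'_i$ escapes the interval, and then bounding the size of that boundary region exactly as in the counting argument of Lemma \ref{lemma}. The linear-algebra part, producing an actual solution of $A\mathbf{x}^T=\mathbf{b}^T$, is immediate and deterministic; all the remaining subtlety is in controlling membership in the constraint set $\mathcal{S}$.
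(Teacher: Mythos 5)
Your extraction step is exactly the paper's: locate a coordinate $j$ where the two challenge strings differ, subtract the corresponding verification equations, let the common commitment ${\bf r}_j$ cancel, and obtain $A({\bf s}'_j-{\bf s}_j)^T={\bf b}^T$, all in one vector subtraction. Where you genuinely diverge is in certifying that the extracted vector lies in $\mathcal{S}$, which both you and the paper recognize as the only nontrivial part. The paper short-circuits it by writing the valid responses in their honest form, ${\bf s}_j={\bf k}_j$ and ${\bf s}'_j={\bf k}'_j+{\bf x}$ with the same nonce, so that the difference is literally the secret ${\bf x}$, which lies in $\mathcal{S}$ with high probability by assumption (\ref{assumption}); this is short, but it only applies to transcripts having the prescribed internal structure (e.g.\ those obtained by rewinding an honest prover), not to arbitrary valid transcripts. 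You instead treat ${\bf s}_j\in\mathcal{S}'$ and ${\bf s}'_j\in\mathcal{S}$ as arbitrary vectors passing verification and argue that a ``large minus small'' difference stays in the box $\mathcal{S}$ via a subtraction analogue of Lemma \ref{lemma}; this is the right level of generality for special soundness, and it is a justification the paper does not give. The trade-off is that your probabilistic step quietly requires the responses to be (close to) uniform, which adversarial transcripts need not be: bounding the measure of the thin boundary region does not repair this, because nothing in the verification forbids a cheating prover from deliberately choosing ${\bf s}'_j$ with entries at the lower end of its intervals, in which case the difference exits $\mathcal{S}$ with certainty rather than with small probability, and the extractor only lands in the relaxed set $\{{\bf u}-{\bf v}:{\bf u}\in\mathcal{S},\ {\bf v}\in\mathcal{S}'\}$, on which hardness of compact knapsack must be assumed separately. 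This limitation is shared by the paper's proof, which hides it inside the assumption that the responses have honest form; you are more explicit about the hole, but neither argument actually closes it.
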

\begin{proof}
Let $j$ such that $e_j=0$ and $e'_j=1.$ Then, $A{\bf s}_j^{T}={\bf r}_j^T$ and $A{\bf s}_j'^{T}={\bf r}_j^T+{\bf b}^T,$ where ${\bf s}_j={\bf k}_j$ and ${\bf s}'_j={\bf k}'_j+{\bf x}$ belongs to ${\mathcal{S}}$ with high probability. Thus, ${\bf s}_j'-{\bf s}_j={\bf x}\in {\mathcal{S}}$ 	with high probability. Since, $A({\bf s}_j'^T-{\bf s}_j^T)={\bf b}^T,$ i.e. ${\bf s}_j'-{\bf s}_j\in \Sigma_{\bf b}$. the Proposition follows.
\end{proof}
Since we have assumed that compact knapsack problem is difficult for the specific sets ${\mathcal{S}}$ and ${\mathcal{S}'},$ we conclude that the system is special sound. Furthermore, since the challenge space is also large, the system is proof of knowledge for the compact knapsack.\\
$({\bf iii}).$ The third property is the {\it{special HVZK}}, which means that, given any challenge $e,$ we can simulate a transcript  $({\bf r}, e, {\bf s})$, which is indistinguishable from a real transcript with challenge $e.$ Real transcript is the transcript generated by an honest prover and an honest verifier. The simulator takes as input the public key and a challenge $e,$ and it creates ${\bf s}$ according to $e,$ i.e. the simulator chooses ${\bf s}$ uniformly from  the set ${\mathcal{S}}'$ if $e=0,$ else it picks ${\bf s}$ from ${\mathcal{S}}.$ Finally, we set ${\bf r}\leftarrow A{\bf s}^T - e{\bf b}^T.$ Therefore, the transcript $({\bf r},e,{\bf s})$ follows the same distribution as the transcript between the real prover and the real verifier.

\subsection{Soundness}
Another useful requirement is the {\it{soundness}} property.  
Let Eve be an adversary. The scheme is sound if Eve knowing only the public key, can pass the verification test with only negligible probability.
The soundness of the scheme depends on the number of iterations $t.$ For  $t=1$ the protocol is not sound, since Eve with probability 1/2 can pass the verification test.
 
Indeed, say that Eve by tossing up a fair coin, picks the right $e'\in\{0,1\}.$ Then, she computes a random vector ${\bf s}\in {\mathcal{S}}$ if $e'=0,$ else she chooses a random ${\bf s}$ from ${\mathcal{S}'}.$ 
Then, the pair $({\bf r}^T=A\cdot {\bf s}^T,{\bf s})$ passes the verification step if $e'=0$, else the pair
$({\bf r}^T=A{\bf s}^T-{\bf b}^T,{\bf s})$ passes the verification step.  Thus, Eve can pass the verification test with probability $1/2.$

Since the protocol is not one round, but we execute $t-$ rounds, therefore, for $t=80$ the success rate of Eve is $2^{-80}.$
To prove that the scheme is sound\footnote{In \cite{Damgard} {\it{sound}} protocols are called {\it{knowledge sound}} protocols, see \cite[Definition 2]{Damgard}. In \cite{BonehShoup} is called {\it{secure under direct attacks}} and the notion of {\it{knowledge soundness}} coincides with the {\it{special soundness}}. Here we follow the terminology of \cite{BonehShoup}. However, since we report also some results from \cite{Damgard} we point out the differences.} we have to show that this success rate can not be improved unless the compact knapsack problem is easy.
Our system is sound, because is a Sigma protocol with large challenge space. This is proved in \cite[Theorem 19.14]{BonehShoup}. A similar proof is provided in \cite[Theorem 1]{Damgard}.

\subsection{Reducing the information complexity}
We can seemingly change our Sigma protocol, in such a way, that the number of rounds be inserted in each step of the protocol. I.e. the prover generates vectors ${\bf k}_1,...,{\bf k}_t$ and the commitments form a matrix ${\mathcal{R}}$ with columns ${\bf r}_1,...,{\bf r}_t$, where ${\bf r}_i^{T}=A{\bf k}_i^T.$ Similarly the verifier sends a binary vector ${\bf e}$ of $\{0,1\}^t$ and finally the prover, in the third step, sends a matrix ${\mathbb{S}}$ containing as columns the vectors ${\bf s}_1={\bf k}_1+e_1{\bf x},...,{\bf s}_t={\bf k}_t+e_t{\bf x}.$ Then, the new id-scheme generated by parallel repetition, it is again Sigma protocol as the initial one. For instance, see \cite[Ex. 19.5, p.766]{BonehShoup}. So, our system can be described as follows:
\begin{itemize}
    \item (\textit{Commitment}). Alice picks a $n\times t$ random matrix $\mathcal{K}$  with columns, ${\bf k}_1,...,{\bf k}_t$ in  ${\mathcal{S}}'$. I.e.
    $${\mathcal{K}}=\left[\begin{array}{ccc}
			| &    &   | \\
			{\bf k}_1^T &  \dots &  {\bf k}_t^T  \\
			| &  &|  		
		\end{array}\right].$$    
     Then, she computes the matrix:
  $$  {\mathcal{R}} = A  \mathcal{K} =\left[\begin{array}{ccc}
			| &    &   | \\
			A{\bf k}_1^T &  \dots &  A{\bf k}_t^T  \\
			| &  &|  \	
		\end{array}\right]\in {\mathbb{Z}}^{m\times t},$$
    and sends it to Bob.
    \item (\textit{Challenge}). Bob randomly picks a binary vector ${\bf e}=(e_1,...,e_t)$ from ${\mathcal{C}}=\{0,1\}^t$ (challenge space) and sends it to Alice.
    \item (\textit{Response}). Alice computes
    $$ \textbf{s}_i = \textbf{k}_i + e_i \textbf{x}\in {\mathbb{Z}}^n$$
    and sends 
$${\mathbb{S}}=\left[\begin{array}{ccc}
			| &    &   | \\
			{\bf s}_1^T &  \dots &  {\bf s}_t^T  \\
			| &  &|  
		\end{array}\right] ={\mathcal{K}}+\left[\begin{array}{ccc}
			| &    &   | \\
			e_1{\bf x}^T &  \dots &  e_t{\bf x}^T  \\
			| &  &|  
		\end{array}\right]\in {\mathbb{Z}}^{n\times t},$$    
    to Bob.
    \item (\textit{Verification}). We set ${\bf r}_i^T=A{\bf k}_i^T \in {\mathbb{Z}}^m, 1\le i\le t.$ Bob computes $A{\mathbb{S}}.$ Then, he verifies the equality of matrices: 
    $$\left[\begin{array}{ccc}
			| &    &   | \\
			A{\bf s}_1^T &  \dots &   A{\bf s}_t^T  \\
			| &  &|  \\		
		\end{array}\right]=\left[\begin{array}{ccc}
			| &    &   | \\
			{\bf r}_1^T+e_1{\bf b}^T &  \dots &  {\bf r}_t^T+e_t{\bf b}^T  \\
			| &  &|  \\		
		\end{array}\right]\in {\mathbb{Z}}^{m\times t},$$  
   and that  for every $i$, $\textbf{s}_i \in {\mathcal{S}}'$ if $e_i=0$. If $e_i=1$, our assumption (\ref{assumption}) allows Alice to choose from the beginning the sets ${\mathcal{S}}$ and ${\mathcal{S}}'$, such that $\textbf{s}_i \in {\mathcal{S}}$ with large probability $(\approx 1)$.
\end{itemize}

\section{The corresponding digital signature}
The Fiat-Shamir (FS) transform \cite{FiatShamir}  creates a non interactive proof of system or in the case of three moves id schemes, a  digital signature. In the latter case, it combines a hash function  to create a digital signature scheme, which under some conditions of the id scheme, the derived digital signature is secure in the Random Oracle Model (ROM). Such a paradigm is provided in the construction of the Schnorr digital signature in \cite{Sc1} and also to the newer digital signatures$:$ Picnic \cite{Picnic} and Dilithium \cite{Dilithium}.

We also assume that ${\mathcal{S}}={\mathcal{S}}_{\alpha_1,\alpha_2,\dots,\alpha_k}(n,R)$ and ${\mathcal{S}}'={\mathcal{S}}_{\beta_1,\beta_2,\dots,\beta_k}(n,R),$ where $$\beta_i\leq \frac{1}{R\ln{2}} \ln(2^{\alpha_i R-1} +1)\ (i=1,2,...,k),$$
for some $k$ and $R$ positive integers. 
In order to apply FS transform, we generate a transcript of our id scheme, i.e. a triple (for simplicity consider $t=1$),
\begin{center}
({\texttt{Commit}}: ${\bf r}$,  {\texttt{Challenge}}: $e$, {\texttt{Response}}: ${\bf s}$).
\end{center}
Let ${\mathcal{C}}$ be the commitment space and ${\mathcal{M}}$ the message space. The {\texttt{Challenge}} is computed as $e={\mathcal{H}}({\bf r}||\text{msg})$ with ${\mathcal{H}}: {\mathcal{C}} \times {\mathcal{M}}\rightarrow \{0,1\}$ be a hash function modeled as a random oracle. A hash ${\mathcal{H}},$ with the previous property, generates the challenge randomly as the verifier does in the id scheme. In this way we remove the verifier and transform the id scheme to a non-interactive one-move scheme.
The signature of the derived scheme is $\sigma = ({\bf r}, {\bf s})$ and it is valid if the transcript $({\bf r},e={\mathcal{H}}({\bf r}||\text{msg}),{\bf s})$ pass the verification algorithm.
I.e. the digital signature protocol is as follows:\\\\
{\texttt{Generation algorithm $G$}}. (We use the notation of the id scheme). We generate a matrix $A$ $(m\times n)$, then we pick a vector ${\bf x}\in {\mathcal S}\subset {\mathbb{Z}}^n$ and we compute ${\bf b}^T=A{\bf X}^T.$ $G$ outputs the public key $pk=(A,{\bf b},{\mathcal{S}},{\mathcal{S}}',t,{\mathcal{H}}),$ where ${\mathcal{H}}:\{0,1\}^*\rightarrow \{0,1\}^t$ is our hash, $t$ is a positive integer, and the secret key $sk={\bf x}.$
\ \\\\
{\texttt{Sign algorithm}}. Let $msg$ the message we want to sign. The signing algorithm runs as follows:\\
$sign(sk,msg) :$ \\
({\bf a}). Generate  a $n\times t$ random matrix $\mathcal{K}$  with columns: ${\bf k}_1,...,{\bf k}_t$ in  ${\mathcal{S}}'$\\
({\bf b}). $$  {\mathcal{R}} \leftarrow A  \mathcal{K} =\left[\begin{array}{ccc}
			| &    &   | \\
			A{\bf k}_1^T &  \dots &  A{\bf k}_t^T  \\
			| &  &|  \	
		\end{array}\right]\in {\mathbb{Z}}^{m\times t},$$
({\bf c}). ${\bf e}\leftarrow {\mathcal{H}}({\mathcal{R}},msg)$ and \\
({\bf d}).  $ \textbf{s}_i \leftarrow \textbf{k}_i + e_i \textbf{x}\in {\mathbb{Z}}^n$ and consider the matrix, 
$${\mathbb{S}}\leftarrow\left[\begin{array}{ccc}
			| &    &   | \\
			{\bf s}_1^T &  \dots &  {\bf s}_t^T  \\
			| &  &|  
		\end{array}\right]\in {\mathbb{Z}}^{n\times t}.$$    
\ \\
{\it output :} $\sigma = ({\mathcal R} , {\mathbb{S}} ; msg)\in {\mathbb{Z}}^{m\times t} \times {\mathbb{Z}}^{n\times t} \times {\mathcal{M}},$ where ${\mathcal{M}}$ is the message space. The pair $({\mathcal{R}},{\mathbb{S}})$  is the signature of the message $msg.$ 
%The number of elements of the signature is $2tn.$ If $\ell(S)=\alpha, \ell(S')=\alpha'$ are the bits of the largest element in ${\mathcal{S}},{\mathcal{S}'}$ resp., we get that the binary length of the signature is upper bounded by $tn(\alpha+\alpha').$ 
\ \\\\
{\texttt{Verification algorithm}}. The signature verification algorithm accepts as input, the signature $({\mathcal{R}},{\mathbb{S}})$ and the message $msg.$ Then, it computes 
$${\bf e}\leftarrow {\mathcal{H}}({\mathcal{R}},msg)$$ and outputs {\texttt{True}} if $A{\bf s}_i^T={\bf r}_i^T+e_i{\bf b}^T$ and ${\bf s}_i\in {\mathcal{S}}'$ if $e_i=0,$ else ${\bf s}_i\in{\mathcal{S}}$, for all $i=1,2,...,t.$ Else, it outputs {\texttt{False}}. 
\\\\
 % For instance, if the id scheme is Honest-Verifier Zero-Knowledge (HVZK), then the resulting signature is secure in the ROM, see \cite{Abdalla,fiat2}. However, we  follow a different route. 
 In  \cite[Theorem 19.15]{BonehShoup} was proved the following : \\ 
 $\bullet$ If $G()$ is one way (which we assume it is, under compact knapsack assumption),\\
 $\bullet$ The id scheme is S-HVZK, and \\
 $\bullet$ It is sound\footnote{in \cite[Theorem 19.15]{BonehShoup} this notion is called  {\it secure under direct attacks.}}, \\ 
 then we get that the id scheme is secure under eavesdropping attacks. Now, from \cite[Theorem 19.16]{BonehShoup} : \\
 {\it{An id scheme, secure under eavesdropping attacks and having unpredictable commitments, it provides via FS-transform secure signature schemes.}} \\
 Thus, since our scheme is secure under eavesdropping attacks,  it is enough to prove that it has unpredictable commitments. Then, our signature will be secure under active attacks, i.e. EUF-CMA\footnote{Existential UnForgeability under Chosen Message Attack}.	 %This was proved in \cite[]{Abdalla}.  %So, it is enough to show that the id scheme is special HVZK.
\subsubsection{Unpredictable Commitments} This notion has to do with collisions on commitment space. We say that an id scheme has $\delta-$unpredictable commitments if the probability to find a collision in the commitments is $\le \delta.$  

We can upper bound this probability, using simple geometric and combinatorial arguments. In fact, we shall use the same arguments as in \cite[Section 6.1]{draz2}. The probability to find a collision in the commitment space, say ${\mathbb{P}}$, is upper bounded by 
\begin{equation}\label{rel.unpredicatble}
\frac{|\{{\bf X}\in {\mathcal{S'}}: A{\bf X}^T={\bf r}^T\}|}{|{\mathcal{S}}'|}.
\end{equation}
I.e. the probability ${\mathbb{P}}$ to find a collision in commitments space (that is, to find two nonces ${\bf k_1}, {\bf k}_2$ from ${\mathcal{S}}', 	$ such that $A{\bf k}_1^T=A{\bf k}_2^T$) is upper bounded by the probability to find a solution of $A{\bf X}^T={\bf r}^T$ (for fixed ${\bf r}=(r_1,...,r_m)$), if we choose ${\bf X}$ randomly from ${\mathcal{S}}'.$ But the last quantity (\ref{rel.unpredicatble}), is bounded above by
$$\frac{|I_{\beta_1R}^{n/k}|\cdots |I_{\beta_kR}^{n/k-1}|}{|I_{\beta_1R}^{n/k}|\cdots |I_{\beta_kR}^{n/k}|}=\frac{1}{|I_{\beta_kR}|}=\frac{1}{2^{\beta_k R}}.$$
To see this, we remark that there is a hyperplane that meets the {\it{box}} $S'$ to $n-$points. One such hyperplane is a face of ${\mathcal{S}'}.$ So the numerator of the previous probability is bounded above by $|I_{\beta_1 R}^{n/k}|\cdots |I_{\beta_kR}^{n/k-1}|,$ where $\beta_k=\min\{\beta_i\}.$ 

Choosing $R$ large enough we get that ${\mathbb{P}}$ is negligible, so the id scheme has unpredictable commitments.

\subsubsection{The parameters}
If we choose $t=80$ and $A\xleftarrow{\$} M_n(I_{R/8}),$ using the results of table \ref{Table:3}, we set $m=4$ and we consider the set  ${\mathcal{S}}={\mathcal{S}}_{1/4,1/2} (n=48,R=192).$ That is 
$$\alpha_1=1/4,\alpha_2=1/2.$$ We have to choose 
$$\beta_i\leq \frac{1}{R\ln{2}}\ln\big(2^{\alpha_iR-1}+1\big),\ i=1,2.$$ 
Having the previous constraints we pick,
$$\beta_1= \frac{1}{8}, \beta_2 = \frac{1}{4}.$$
Then $\beta_1R=24, \beta_2 R = 48.$ The probabilities, see (\ref{probability}), with the previous choice are  $<10^{-8}.$
After straightforward calculations, we conclude that the signature is $\sim 33$ Kilobyte\footnote{\url{https://github.com/drazioti/compact-knapsack/blob/main/parameters.py}}, the length of the public key is $\sim 230$ Bytes and the secret key $\sim 16$ Bytes. Of course someone choosing smarter the parameters may reduce the length of the signature. This may be a matter of future work. As far as the signing time, this is fast since it consists only from dot products of integer vectors.
%\subsection{Post Quantum Signatures}

%Dilithium (Level 5) (i.e. 256 bit security) is going to be standardized. It uses $~2.5KB$ public key and almost double the length of pk for the secret key,  and signature $~4.5KB.$ It is based in Fiat-Shamir transform with aborts. Its security based on module-LWE/SIS. 

\section{Conclusions}
 In this paper, we have presented an extension of a cryptographic id-scheme based on Compact Knapsack problem. Initially, we experimentally studied the compact knapsack problem in its general case, by providing lattice attacks. Furthermore,  we examined the problem of selecting the correct parameters. What is more, we constructed a three-move id scheme based on compact knapsack. We proved that this scheme has all the nice properties i.e., complete, special sound, special HVZK, and finally sound. A parallel version of it, allow us to reduce the communication complexity and so build a digital signature based on the previous id scheme by using the FS transform. Some study has been done on how to select the parameters, this is in fact the selection of suitable solution sets ${\mathcal{S}}$ and ${\mathcal{S}}',$ so that, the compact knapsack be difficult in these sets.
 
 Under some suitable choice of the parameters the id scheme is secure under active attacks. Then, the corresponding digital signature using FS transform is proved to be secure under ROM. Future research, could extend our study by searching the optimum scheme's parameters and focus on construction of an application based on our id-scheme.
 
We believe that the scheme is secure in the post quantum setting, Since the introduction of the compact knapsack problem, and up until today, there has not been an efficient quantum algorithm developed for solving this problem.

%{\bf Acknowledgment}.\\

\appendix

\section{Smith Normal Form}\label{appendixA}

\begin{definition}
A rectangular integer matrix $S$ with dimension $n$ is in Smith Normal Form (SNF) if and only if $S$ is diagonal, say ${\rm{diag}}(s_1,s_2,...,s_n)$, such that $s_i$ are positive integers (for all $i$) and $s_{i+1}|s_{i}$ for all $i=1,2,...,n-1.$ 
\end{definition}

\begin{theorem}(Smith, 1861 \cite{Smith}).
Let $A\in {\mathcal{M}}_{m\times n}({\mathbb{Z}})$ of rank $r.$ Then there is a diagonal integer matrix $D={\rm diag}(\lambda_1,\lambda_2,...,\lambda_r,0,...,0)$ ($m\times n$) with 
$$ \lambda_1|\lambda_2|\cdots |\lambda_{r}$$
and unimodular matrices $U\in GL_m({\mathbb{Z}})$ and $V\in GL_n({\mathbb{Z}})$ such that 
$$A=UDV.$$
The non zero diagonal elements $\lambda_1,...,\lambda_r$ of $D$ are called {\it{elementary divisors}} of $A$ and are defined up to sign.
\end{theorem}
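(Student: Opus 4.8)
The plan is to prove existence constructively, by reducing $A$ to diagonal form through elementary row and column operations over $\mathbb{Z}$, and then to settle the ``up to sign'' uniqueness separately via invariant gcd's of minors. Recall that each admissible integer operation---swapping two rows (or columns), multiplying a row (or column) by $-1$, and adding an integer multiple of one row (or column) to another---is realized by left multiplication (for rows) or right multiplication (for columns) by an elementary matrix in $GL_m(\mathbb{Z})$ or $GL_n(\mathbb{Z})$. Since such matrices are closed under products and inverses, any finite sequence of these operations amounts to $A \mapsto U'AV'$ with $U'\in GL_m(\mathbb{Z})$ and $V'\in GL_n(\mathbb{Z})$; hence it suffices to transform $A$ into a diagonal matrix $D$ of the stated shape and then set $U=(U')^{-1}$ and $V=(V')^{-1}$.

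The reduction itself proceeds by induction on $\min(m,n)$. If $A=0$ we are done. Otherwise, among all nonzero entries pick one of minimal absolute value and bring it to position $(1,1)$ by row and column swaps. For every other entry $a_{1j}$ of the first row, divide with remainder, $a_{1j}=q\,a_{11}+r$ with $0\le r<|a_{11}|$, and subtract $q$ times the first column from the $j$-th; if some remainder $r$ is nonzero it is a smaller nonzero entry, which we swap into the pivot position and restart. Because the minimal absolute value of a nonzero entry is a positive integer that strictly decreases at each restart, this loop terminates; the same argument clears the first column. At that point $a_{11}$ divides every entry in its row and column, and we may zero those out, leaving a block matrix $\left(\begin{smallmatrix} a_{11} & 0 \\ 0 & A' \end{smallmatrix}\right)$ with $A'$ of size $(m-1)\times(n-1)$.

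The one delicate point---and the step I expect to be the main obstacle---is enforcing the divisibility chain $\lambda_1\mid\lambda_2\mid\cdots\mid\lambda_r$, since mere diagonalization does not guarantee it. To secure it before recursing, I would check whether $a_{11}$ divides every entry of $A'$; if some $a_{ij}$ with $i,j>1$ is not a multiple of $a_{11}$, then adding row $i$ to the first row reintroduces a non-multiple into the first row, and one further division-with-remainder step produces a pivot of strictly smaller absolute value than $a_{11}$. Repeating the whole clearing procedure with this smaller pivot terminates again by the strictly-decreasing-positive-integer argument, and ends with a new $(1,1)$ entry dividing all remaining entries. Setting $\lambda_1=|a_{11}|$ and applying the induction hypothesis to $A'$ (whose entries are then all divisible by $\lambda_1$, a property preserved by the subsequent operations confined to $A'$) yields the full diagonal form with the required divisibility relations; invertibility of $U$ and $V$ preserves rank, so exactly $r$ diagonal entries are nonzero.

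Finally, for the ``defined up to sign'' claim I would invoke the standard invariant: for each $k$ let $d_k(A)$ be the gcd of all $k\times k$ minors of $A$, with $d_0=1$. Since, by the Cauchy--Binet formula together with unimodularity of $U'$ and $V'$, the $k\times k$ minors of $U'AV'$ are integer linear combinations of those of $A$ and conversely, the quantity $d_k$ is unchanged by the reduction, so $d_k(A)=d_k(D)=\lambda_1\cdots\lambda_k$. Hence $\lambda_k=d_k(A)/d_{k-1}(A)$ is determined by $A$ up to sign, which yields simultaneously an intrinsic description of the divisibility chain and the asserted uniqueness of the elementary divisors.
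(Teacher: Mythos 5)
Your proof is correct, and it is worth noting at the outset that the paper itself contains no proof of this statement: the theorem is quoted as a classical result with a citation to Smith (1861), and the surrounding text only points to the algorithmic literature (Bradley; Kannan--Bachem; von zur Gathen--Sieveking; Frumkin; Storjohann) for actually computing the decomposition. Your argument is the standard self-contained constructive one: realize row and column operations as multiplication by matrices in $GL_m(\mathbb{Z})$ and $GL_n(\mathbb{Z})$, reduce to a diagonal block by Euclidean division with a pivot whose absolute value strictly decreases at every restart, enforce the divisibility chain $\lambda_1\mid\lambda_2\mid\cdots\mid\lambda_r$ by the row-addition trick before recursing, and then deduce uniqueness up to sign from the invariance of the determinantal divisors $d_k$ (the gcd of all $k\times k$ minors) under unimodular equivalence, using that $d_k(D)=\lambda_1\cdots\lambda_k$ precisely because the chain condition makes $\lambda_1\cdots\lambda_k$ divide every product of $k$ distinct diagonal entries. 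All the delicate points are handled: termination of the restart loop, preservation of divisibility by $\lambda_1$ under operations confined to the lower-right block, and preservation of rank by invertibility. Two small remarks. First, the appeal to Cauchy--Binet is heavier than needed: since each row of $U'A$ is an integer combination of rows of $A$, multilinearity of the determinant already shows every $k\times k$ minor of $U'A$ is an integer combination of $k\times k$ minors of $A$, and symmetrically for columns and for the inverse direction. Second, what your construction does not give --- and what the paper actually relies on, since it invokes SNF inside its lattice attack --- is a polynomial-time bound: naive pivoting can suffer intermediate coefficient blow-up, which is exactly why the paper cites Kannan--Bachem and Storjohann for polynomial-time algorithms. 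As a proof of the stated theorem, however, your argument is sound and complete, which is more than the paper provides.
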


Algorithms for computing HNF and SNF appeared first time in 1971, see \cite{bradley} (not with polynomial time complexity) and with polynomial time complexity in 1979 by Kannan and Bachem \cite{kannan},  and in 1976 by von zur Gathen and Sieveking \cite{von} and Frumken \cite{frumkin}. Also see \cite[Chapter 5]{good_book}

In other words, for $A\in {\mathcal{M}}_{m\times n}({\mathbb{Z}})$ there are unimodular matrices $P,Q$ such that 
\[
PAQ=\left[\begin{array}{cc}
   {\rm{diag}}(\lambda_1,...,\lambda_r)    &  {\bf 0}_{r\times (n-r)}  \\
   {\bf 0}_{(m-r)\times r} & {\bf 0}_{(m-r)\times (n-r)}
		\end{array}\right],
   \]
		where $r$ is the rank of $A$ and $\lambda_i\in{\mathbb{Z}}_{>0},\ \lambda_i|\lambda_{i+1}.$
		Storjohann \cite[Theorem 12]{storj} provided a deterministic algortithm for computing SNF with complexity,
		$$O(nmr^2\log^{2}(r||A||)+r^4\log^{3}(||A||))\ \text{bit\ operations},$$ where $||A||=\max\{|a_{ij}|\}.$
		
		Let ${\rm{SNF}}(A)=PAQ.$ 
		Smith Normal Form is very useful if we want to solve a linear system over the integers. I.e. we want to solve a linear system $AX={\bf b}^T$ in integers. We start from the system $AX={\bf b}^{T},$ where $A$ is $m\times n$ and ${\bf b}^T$ is $m\times 1$ matrices. Let ${\bf x}^T$ (a $n\times 1$ matrix) an integer solution. We rewrite the system as $D{\bf y}^T={\bf c}^T,$ where $D={\rm{SNF}}(A),$ $Q{\bf y}^T={\bf x}^T$ and ${\bf c}^T=P{\bf b}^T.$ It is easy to solve the diagonal system $D{\bf y}^T={\bf c}^T$ and to see if it has an integer solution (see \cite{Laze}, for the history of this problem). We proved the following.
\begin{proposition}
Let $A$ a $m\times n$ and ${\bf b}^T$ a $m \times 1$ integer matrices. Let $AX={\bf b}^T$ be a linear system over ${\mathbb{Z}}.$ We set $D={\rm{SNF}}(A)={\rm{diag}}(\lambda_1,...,\lambda_r)=PAQ$ and ${\bf c}^T=P{\bf b}^T,$ $r$ the rank of $A$.\\
$({\bf i})$. The system has a solution in integers if and only if $c_{r+1}=c_{r+2}=\cdots=c_{m}=0$ and $\lambda_i|c_i$ for $i=1,2,...,r.$\\
$({\bf ii})$. If the system has an integer solution, the general solution of $DY={\bf c}^T$ is described as
$$(y_1,...,y_n)=(c_1/\lambda_1,c_2/\lambda_2,....,c_r/\lambda_r,t_{r+1},...,t_{n}),\ t_i\in {\mathbb{Z}}\ (\text{free}).$$
So, the general solution of the initial system is 
$$(x_1,...,x_n)^T=Q(y_1,...,y_n)^T.$$
\end{proposition}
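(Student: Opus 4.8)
The plan is to reduce the original system to the diagonal system by the unimodular change of variables and then read off both claims componentwise. First I would record the algebraic identity underlying the reduction. Since $D={\rm{SNF}}(A)=PAQ$ with $P\in GL_m({\mathbb{Z}})$ and $Q\in GL_n({\mathbb{Z}})$ unimodular, left-multiplying $A{\bf x}^T={\bf b}^T$ by $P$ gives $PA{\bf x}^T=P{\bf b}^T={\bf c}^T$, and setting ${\bf x}^T=Q{\bf y}^T$ turns this into $PAQ{\bf y}^T=D{\bf y}^T={\bf c}^T$. Conversely, any integer ${\bf y}$ with $D{\bf y}^T={\bf c}^T$ produces ${\bf x}^T=Q{\bf y}^T$ satisfying $A{\bf x}^T={\bf b}^T$, as one checks by multiplying back by $P^{-1}$, which is again an integer matrix because $P$ is unimodular.

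The crucial observation is that $Q$ is unimodular, so the map ${\bf y}\mapsto Q{\bf y}^T$ is a bijection of ${\mathbb{Z}}^n$ onto itself (its inverse $Q^{-1}$ also has integer entries). Hence the integer solutions of $A{\bf x}^T={\bf b}^T$ are in bijective correspondence with the integer solutions of $D{\bf y}^T={\bf c}^T$: one system is solvable over ${\mathbb{Z}}$ exactly when the other is, and the full solution set of the original system is the image under ${\bf x}^T=Q{\bf y}^T$ of the solution set of the diagonal system. This already yields the last assertion of part (ii) once the general solution of the diagonal system is determined.

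It then remains to analyze $D{\bf y}^T={\bf c}^T$ componentwise, keeping in mind that $D$ is the rectangular $m\times n$ matrix carrying $\lambda_1,\dots,\lambda_r$ on the diagonal and zeros elsewhere. For $1\le i\le r$ the $i$-th equation reads $\lambda_i y_i=c_i$; for $r+1\le i\le m$ it reads $0=c_i$ (the zero rows); and the variables $y_{r+1},\dots,y_n$ occur in no equation, since the columns beyond the $r$-th vanish. Therefore an integer solution exists if and only if the zero rows are consistent, i.e. $c_{r+1}=\cdots=c_m=0$, and each $y_i=c_i/\lambda_i$ is an integer, i.e. $\lambda_i\mid c_i$ for $1\le i\le r$; this is precisely (i). When these hold, $y_i$ is forced to equal $c_i/\lambda_i$ for $i\le r$ while $y_{r+1},\dots,y_n$ are arbitrary integers $t_{r+1},\dots,t_n$, which is the stated general solution in (ii); transporting it through ${\bf x}^T=Q{\bf y}^T$ finishes the argument.

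I do not expect a serious obstacle, as the proof is essentially bookkeeping. The only points requiring care are the two-sidedness of the correspondence, for which one must use that $P$, $Q$ and their inverses are all integer matrices so that nothing is lost in passing between the systems, and the rectangular shape of $D$, which must be tracked so that the $m-r$ zero rows produce the consistency conditions $c_{r+1}=\cdots=c_m=0$ and the last $n-r$ columns produce exactly the free parameters $t_{r+1},\dots,t_n$.
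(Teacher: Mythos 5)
Your proof is correct and follows essentially the same route as the paper: the paper's (very terse) argument is exactly the change of variables $D{\bf y}^T={\bf c}^T$ with ${\bf x}^T=Q{\bf y}^T$, ${\bf c}^T=P{\bf b}^T$, followed by solving the diagonal system. You merely spell out the details the paper leaves implicit, namely that unimodularity of $P$ and $Q$ makes the correspondence between integer solution sets bijective, and the componentwise analysis of the rectangular diagonal system.
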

\begin{corollary}
The system $AX={\bf b}^T$ has integer solutions if and only if the elementary divisors of $A,$ say $\lambda_1,...,\lambda_r,$ divide the constants $c_1,...,c_r$ (resp.) where ${\bf c}^T=P{\bf b}^T.$
\end{corollary}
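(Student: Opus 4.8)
The plan is to obtain this as an immediate consequence of part $({\bf i})$ of the preceding Proposition. First I would recall the reduction already set up above: since $P$ is unimodular, the equation $AX={\bf b}^T$ is equivalent to $PAX=P{\bf b}^T={\bf c}^T$, and writing $D={\rm SNF}(A)=PAQ$ together with the change of variables $X=Q{\bf y}^T$ turns the system into the diagonal system $D{\bf y}^T={\bf c}^T$. Because $Q$ is unimodular, $X$ ranges over ${\mathbb Z}^n$ exactly when ${\bf y}$ does, so integer solvability of $AX={\bf b}^T$ is equivalent to integer solvability of $D{\bf y}^T={\bf c}^T$.

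Next I would read off the solvability of the diagonal system directly. Since $D={\rm diag}(\lambda_1,\dots,\lambda_r,0,\dots,0)$, the equation $D{\bf y}^T={\bf c}^T$ splits into $\lambda_i y_i=c_i$ for $i\le r$ and $0=c_i$ for $r<i\le m$. The first group is solvable in integers precisely when $\lambda_i\mid c_i$, giving $y_i=c_i/\lambda_i$; the second group forces $c_{r+1}=\cdots=c_m=0$. This is exactly the content of part $({\bf i})$, and the Corollary isolates the divisibility half of this criterion.

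The only subtlety, which I expect to be the main point to address, is the gap between the Corollary's formulation (only the divisibility conditions $\lambda_i\mid c_i$) and the full criterion of the Proposition (divisibility together with $c_{r+1}=\cdots=c_m=0$). I would resolve this by observing that the vanishing conditions are precisely the consistency conditions over ${\mathbb Q}$: the last $m-r$ rows of $D$ are zero, so $D{\bf y}^T={\bf c}^T$ admits a rational solution if and only if $c_{r+1}=\cdots=c_m=0$, and this is equivalent to $AX={\bf b}^T$ being consistent over ${\mathbb Q}$. Hence, for any system solvable over the rationals, the sole remaining obstruction to an integer solution is arithmetic, namely the divisibilities $\lambda_i\mid c_i$, and the Corollary follows at once. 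If instead one wishes the Corollary to stand without any rational-consistency hypothesis, I would simply carry along the vanishing conditions explicitly, since they are already part of the cited Proposition.
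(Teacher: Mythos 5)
Your proposal is correct and follows the same route as the paper: the paper offers no separate proof of the Corollary, treating it as an immediate restatement of part $({\bf i})$ of the Proposition, which is exactly what you do. Your flagged ``subtlety'' is moreover a genuine defect in the paper's statement, not just a presentational choice on your part: as written, the Corollary silently drops the consistency conditions $c_{r+1}=\cdots=c_m=0$, and without them the ``if'' direction fails. For instance, take $A$ to be the $2\times 1$ matrix with entries $1,0$ and ${\bf b}=(0,1)$; then $P=I_2$, $Q=I_1$, $\lambda_1=1$ divides $c_1=0$, yet $AX={\bf b}^T$ has no solution since its second equation reads $0=1$. So your resolution is the right one: either restrict the Corollary to systems that are consistent over ${\mathbb{Q}}$ (equivalently over ${\mathbb{R}}$), in which case the vanishing conditions hold automatically and divisibility is the sole remaining obstruction, or else state the vanishing conditions explicitly alongside the divisibilities, as the Proposition itself does.
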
		
			If ${\rm{SNF}}(A)=PAQ,$ then from \cite{newman}, it is proved that the last $n-r$ columns of $Q$ is a basis for the integer lattice $AX={\bf 0}.$ 
\end{document}